\documentclass[runningheads,envcountsame]{llncs}

\usepackage{amsmath}
\usepackage{amssymb}

\usepackage{hyperref}

\usepackage{float}
\usepackage{wrapfig}
\usepackage{subcaption}
\usepackage{microtype}

\usepackage{xcolor}
\usepackage{tikz}
\usetikzlibrary{positioning}
\usetikzlibrary{shapes.multipart}

\usepackage{orcidlink}
\usepackage{bbding}

\usepackage{xspace}

\spnewtheorem{claim}[theorem]{Claim}{\bfseries}{\itshape}

\renewcommand{\to}[1][]{\mathop{\xrightarrow{~#1~}}}
\newcommand{\Nat}{\mathbb{N}}

\newcommand{\of}[1]{(#1)}
\newcommand{\tuple}[1]{\langle#1\rangle}
\newcommand{\set}[1]{\{#1\}}

\newcommand{\Sizeof}[1]{\left|#1\right|}

\newcommand{\Set}[1]{\left\{#1\right\}}

\newcommand{\newclass}[2]{
  \providecommand{#1}{}
  \renewcommand{#1}{\textup{\textsc{#2}}\xspace}
}
\newclass{\pspace}{PSpace}
\newclass{\expspace}{ExpSpace}

\newcommand{\newcomponent}[2]{
  \providecommand{#1}{}
  \renewcommand{#1}{\ensuremath{\mathsf{#2}}\xspace}
}
\newcommand{\newinstruction}[2]{
  \providecommand{#1}{}
  \renewcommand{#1}{\ensuremath{\mathtt{#2}}\xspace}
}
\newcommand{\newrelation}[1]{%
  \expandafter\providecommand\csname #1\endcsname{}
  \expandafter\renewcommand\csname #1\endcsname{%
    \begingroup
    \ensuremath{\textcolor{#1}{\mathsf{#1}}}%
    \endgroup\xspace
  }%
}

\newcomponent{\rdmaprog}{R}
\newcomponent{\process}{P}
\newcomponent{\nodeset}{Node}
\newcomponent{\Tid}{Tid}
\newcomponent{\loc}{loc}
\newcomponent{\locset}{Loc}
\newcomponent{\valset}{Val} 
\newcomponent{\labset}{Lab}
\newcomponent{\pathset}{Seq}
\newcomponent{\A}{A}
\newcomponent{\B}{B}
\newcomponent{\C}{C}
\newcomponent{\D}{D}

\definecolor{rel}{RGB}{0,0,0}
\definecolor{po}{RGB}{50,75,100}
\definecolor{pf}{RGB}{153,102,51}
\definecolor{rf}{RGB}{0,153,0}
\definecolor{mo}{RGB}{255,128,0}
\definecolor{nfo}{RGB}{75,0,130}
\definecolor{rb}{RGB}{191,0,64}
\definecolor{sc}{RGB}{209,28,28}
\definecolor{ippo}{RGB}{0,173,238}
\definecolor{oppo}{RGB}{236,2,141}
\definecolor{ib}{RGB}{0,0,255}
\definecolor{ob}{RGB}{0,128,128}
\definecolor{sqp}{RGB}{0,0,0}
\newrelation{rel}
\newrelation{po}
\newrelation{pf}
\newrelation{rf}
\newrelation{mo}
\newrelation{nfo}
\newrelation{rb}
\newrelation{sc}
\newrelation{ippo}
\newrelation{oppo}
\newrelation{ib}
\newrelation{ob}
\newrelation{sqp}
\newcommand{\relName}[2]{\textup{\textcolor{#2}{\textsf{#1}}}}

\newinstruction{\nop}{skip}
\newinstruction{\assume}{assume}
\newinstruction{\CAS}{CAS}
\newinstruction{\poll}{poll}
\newinstruction{\rfence}{rfence}

\newinstruction{\rd}{rd}
\newinstruction{\wr}{wr}

\newinstruction{\nlR}{nlR}
\newinstruction{\nrW}{nrW}
\newinstruction{\nrR}{nrR}
\newinstruction{\nlW}{nlW}
\newinstruction{\nF}{nF}
\newinstruction{\lR}{lR}
\newinstruction{\lW}{lW}
\newinstruction{\CPU}{CPU}

\newinstruction{\Inst}{Inst}
\newinstruction{\nW}{nW}
\newinstruction{\writes}{W}
\newinstruction{\reads}{R}

\title{On the Verification Problem \\ of Remote Direct Memory Access programs}

\titlerunning{On the Verification Problem of RDMA programs}

\author{
  Parosh Aziz Abdulla\inst{1}\inst{2}\orcidlink{0000-0001-6832-6611} \and
  Mohamed Faouzi Atig\inst{1}\orcidlink{0000-0001-8229-3481} \and \\
  Govind Rajanbabu\inst{3}\orcidlink{0000-0002-1634-5893} \and
  Stephan Spengler \Envelope{}\inst{1}\orcidlink{0009-0009-5722-8843}
}

\institute{
  Uppsala University, Sweden\\
  \email{\{parosh.abdulla, mohamed\_faouzi.atig, stephan.spengler\}@it.uu.se}
  \and
  M\"alardalen University, Sweden
  \and
  Institute of Mathematical Sciences (IMSc), Chennai, India\\
  \email{govind@imsc.res.in}
}

\authorrunning{P.A. Abdulla, M.F. Atig, G. Rajanbabu, S. Spengler}

\begin{document}

\maketitle

\begin{abstract}
  Remote Direct Memory Access (RDMA) is a technology that allows direct memory access from the memory of one computer into that of another without involving either one's operating system. This enables high-throughput, low-latency networking, which is especially useful in massively parallel computer clusters.

  In this paper, we study the reachability and robustness problems for RDMA programs. 
  We show that reachability is undecidable in general, even for a restricted fragment of the model. 
  We then focus on robustness, which asks whether a program exhibits the same behaviours under the RDMA and sequential consistency (SC) semantics, and prove that this problem is decidable. 
  Our central technical result establishes a normal form for robustness violations, showing that any non-robust program admits a violating execution of a specific form.
  We then leverage this normal form to obtain a decision procedure that reduces robustness to reachability in finite-state programs with counters, yielding an \expspace upper bound in the general case, and a \pspace upper bound in the absence of poll operations. 
  Finally, we also show that both of these bounds are optimal. 

  \keywords{RDMA \and Weak Memory Models \and Robustness \and Verification \and Concurrent Systems.}
\end{abstract}

\section{Introduction}
\label{introduction:section}
Modern multiprocessor architectures implement weak memory models that relax the guarantees of sequential consistency (SC) to enable aggressive hardware and compiler optimisations. Classical models such as TSO, PSO, ARM, POWER, and RC11 permit various forms of reordering, buffering, and speculation, thereby admitting behaviours that are impossible under SC. While essential for performance, these relaxations significantly complicate reasoning about concurrent programs.
This increased complexity is inherent and well-established. For instance, reachability for finite-state concurrent programs is \pspace-complete under SC, but becomes non-primitive recursive under TSO and undecidable under the Release-Acquire (RA) fragment of RC11.
Thus, verifying safety properties under weak memory models is fundamentally harder than under SC, posing significant challenges for program analysis.

Remote Direct Memory Access (RDMA) has recently gained prominence in high-performance computing \cite{AmbalDEK0R24,DBLP:conf/esop/AmbalLR25}.
RDMA enables a process to directly read from or write to the memory of a remote machine, bypassing the remote CPU and operating system on the critical path.
This paradigm is widely used in data centres, where it offers low latency and high throughput. However, RDMA introduces a non-trivial memory model that differs fundamentally from both shared-memory concurrency and message passing.
RDMA supports one-sided operations, provides weak ordering guarantees, and allows remote memory updates to be observed asynchronously with respect to local CPU accesses. Consequently, reasoning about correctness requires explicit consideration of visibility, ordering, and synchronisation mechanisms such as fences and completion events.
These characteristics distinguish RDMA from existing weak memory models, necessitating specialised verification techniques.

In this work, we first study the reachability problem for RDMA programs.
We show that reachability is undecidable in general, via a reduction from Post's Correspondence Problem.
Notably, this undecidability holds even for finite-state RDMA programs without \emph{poll} operations or \emph{fence} instructions, highlighting the inherent complexity of the model.

The difficulty of proving safety properties under weak memory has motivated the study of correctness notions that enable sound reasoning despite relaxed semantics.
A fundamental question is whether a program behaves \emph{robustly} under a given weak memory model \cite{BouajjaniDM13}.
Informally, a program is robust if every behaviour under the weak model is observationally equivalent to some behaviour under SC.
Robustness is appealing for two reasons. First, if a program is robust and correct under SC, then it remains correct under the weak model.
Second, verifying robustness can be simpler than verifying safety directly.
For example, checking reachability under TSO is non-primitive-recursive \cite{AtigBBM10,AtigBBM12}, whereas checking robustness is \pspace-complete \cite{BouajjaniDM13}.

Recently, Ambal et al.~\cite{DBLP:conf/esop/AmbalLR25} initiated the study of robustness for RDMA programs.
They established sufficient conditions for robustness, but these are restrictive and do not characterise all robust programs (see, e.g., Fig.~6(b) in~\cite{DBLP:conf/esop/AmbalLR25}).
Their work leaves open the decidability and complexity of robustness.
In this work, we resolve these questions by showing that robustness of finite-state RDMA programs is decidable and \expspace-complete.
Our approach relies on a structural characterisation of robustness violations, enabling a reduction to reachability in SC programs augmented with finite control and counters.

To formalise RDMA behaviours, we build on the axiomatic semantics of Ambal et al.~\cite{AmbalDEK0R24}.
Their consistency check requires verifying acyclicity of two relations over execution events, which is rather complex.
We revisit this semantics and introduce a simpler, equivalent characterisation of RDMA consistency.
This refined view serves as the foundation for our decidability results.
To establish the \expspace upper bound, we introduce a novel instrumentation technique that transforms an RDMA program into an SC program.
This SC program simulates all behaviours of the original while preserving observational equivalence, enabling the use of existing SC verification tools.
We also identify a simpler fragment -- RDMA programs without poll operations -- for which robustness reduces to reachability in purely finite-state SC programs.

In summary, our main contributions are:
\begin{itemize}
    \item We prove that reachability for RDMA programs is undecidable, even without poll operations or fences.
    \item We simplify the axiomatic RDMA semantics of \cite{AmbalDEK0R24}, providing an equivalent but more tractable consistency check.
    \item We establish a normal form for robustness violations by rearranging computations around a pivotal event.
    \item We present a decision procedure for robustness based on an instrumentation that transforms RDMA programs into SC programs with counters, yielding an \expspace upper bound (\pspace without poll operations).
    \item We prove matching lower bounds: robustness is \expspace-complete in general and \pspace-complete without poll operations.
\end{itemize}

\noindent\textbf{Related work.}
We base our formalisation of RDMA behaviours on the RDMA semantics proposed by Ambal et al.~\cite{AmbalDEK0R24,DBLP:conf/esop/AmbalLR25}.
In~\cite{AmbalDEK0R24}, the authors assume TSO processors per node, while~\cite{DBLP:conf/esop/AmbalLR25} considers SC processors. 
Both works present axiomatic and operational semantics for RDMA and propose consistency-checking procedures. 
An earlier semantics by Dan et al.~\cite{DanLHV16} fails to capture certain RDMA behaviours, as shown in~\cite{AmbalDEK0R24}.

The reachability problem has been extensively studied for various weak memory models, such as 
TSO~\cite{AbdullaABN16,AbdullaAP15,AtigBBM12,AtigBBM10},
RA~\cite{AbdullaAAK19}, 
Causal Consistency~\cite{Toplas-LahavB22},  
PSO~\cite{AbdullaAKR15},
Power~\cite{AbdullaABDLM20}, 
Intel-x86 with Persistency~\cite{AbdullaABKS24,AbdullaABKS21},
Promising Semantics~\cite{AbdullaAGKV21}, 
and Localized Release-Acquire~\cite{SinghL24}.
The complexity of the reachability problem varies significantly across these models, ranging from non-primitive recursive to undecidable, depending on the specific memory model and program restrictions.
To the best of our knowledge, our work is the first to study the reachability problem for RDMA programs, establishing its undecidability in general.

The robustness problem has also been extensively studied for various memory models~\cite{DBLP:conf/icalp/BouajjaniMM11,BouajjaniDM13,CalinDMM13,DerevenetcM14,BouajjaniDM14,DBLP:phd/dnb/Derevenetc15}. 
For RDMA, Ambal et al.~\cite{DBLP:conf/esop/AmbalLR25} initiated the study of robustness, providing sufficient (but restrictive) conditions for ensuring robustness; their paper also gives robust examples not covered by these conditions (Fig.~6(b)).
To the best of our knowledge, our work is the first to establish decidability and complexity results for robustness and reachability-based safety verification under RDMA semantics.
\section{RDMA Program Semantics}\label{sec:prelims}

A program consists of a finite set of threads that operate on a distributed memory system composed of multiple nodes.
Each node contains its own local memory and processes, and nodes communicate with each other using Remote Direct Memory Access (RDMA) operations. 
We consider a finite instruction set consisting of local reads and writes, remote write operations, remote read operations, polling instructions ($\poll$), and a remote-fence instruction.
Local instructions operate directly on the node's memory, whereas remote instructions access the memory of other nodes via their \emph{network interface controllers} (NIC).

In the following, $x, y, z \in \locset$ denote memory locations, $v_r, v_w \in \valset$ denote values, and $n \in \nodeset = \set{1, \dots, N}$ denotes nodes.
A bar above a location or node indicates a remote location or node, respectively.
Each thread (or \emph{process}) $\process$ in an RDMA program is defined by the following grammar:
\begin{align*}
    \process ::= \quad &\ \nop \mid \process_1 ; \process_2 \mid \process_1 + \process_2 \mid \process^* \\
           \mid  &\ x := v_w \mid \assume(x = v_r) \mid \assume(x \neq v_r) \mid x := \CAS(z, v_r, v_w) \\
           \mid  &\ x := \bar y \mid \bar y := x \mid \rfence(\bar n) \mid \poll(\bar n)
\end{align*}

To model communication between nodes, each process $\process$ maintains a distinct buffer system (or \emph{queue pair} in the terminology of~\cite{AmbalDEK0R24}) for every remote node $\bar n$ whose memory it may access.
Remote operations issued by process $\process$ towards node $\bar n$ are first placed in the buffer system, which consists of multiple FIFO buffers.
Their arguments are not defined at the point when they are issued, since they may depend on local memory reads that occur after the remote operation is issued. 
Furthermore, they need to travel through the buffer system before they are propagated to the remote node.
Therefore, the completion of remote operations is not instantaneous, and may occur out of order with respect to the program order of the issuing process.
This allows the reordering of remote operations with respect to each other and with respect to local memory accesses, inducing a weak memory model.

We briefly describe the key buffers in the queue system; the full operational model involves additional buffers and is detailed in~\cite{AmbalDEK0R24}.
When a process issues a remote write, a symbolic entry $\bar{y} := x$ is placed in the first buffer $B_1$.
This entry is not yet instantiated with the value of $x$ -- the local variable $x$ may be modified by subsequent CPU operations before the entry is processed.
When the entry reaches the head of $B_1$, the current value $v$ of $x$ is read from local memory, and an instantiated entry $\bar{y} := v$ is forwarded to buffer $B_2$.
This delayed instantiation allows local writes to overtake the value read by pending remote operations.
Buffer $B_2$ transmits the instantiated write to the remote node, where it updates the target memory location $\bar{y}$.
Upon completion, an acknowledgment is placed in buffer $B_3$, which returns to the issuing node.
Remote reads follow a similar pattern, with instantiation occurring at the remote node.
Both remote read and remote write operations generate acknowledgments that can be consumed by $\poll$ operations.

$\poll$ operations provide a mechanism for synchronising with the completion of remote operations.
A poll instruction $\poll(\bar n)$ blocks until the earliest (in program order) pending remote operation issued by the calling process towards node $\bar n$ has completed.
$\poll$ operations can be used to prevent undesired reordering between remote operations and subsequent local memory accesses by ensuring that a remote operation has completed before execution proceeds.
However, note that each $\poll$ waits for exactly one remote operation to complete, rather than for all pending operations towards the same node.
A remote-fence instruction $\rfence$ ensures that all remote operations issued by the calling process towards a node prior to the fence are completed before any subsequent remote operations towards the same node are issued.

We will recall the full details of the axiomatic semantics in~\autoref{declarative:section}.
\section{Undecidability of Reachability}
\label{undecidability:section}
In this section, we show that the reachability problem for RDMA programs is undecidable.
Here, reachability is classical control-state reachability: given an RDMA program and a control state of one thread, we ask whether some execution from the initial configuration reaches a configuration in which that thread is in the target control state, while all other threads may be in arbitrary control states.
The proof will follow from a reduction from Post's Correspondence Problem (PCP)~\cite{Post1946}, a well-known undecidable problem.
An instance of PCP is given by two sequences of non-empty words over some finite alphabet $\Sigma$, say $u_1, u_2, \dots, u_N$ and $v_1, v_2, \dots, v_N$.
The problem asks whether there exists a sequence $i_1, i_2, \dots, i_k$ of indices such that $u_{i_1} u_{i_2} \dots u_{i_k} = v_{i_1} v_{i_2} \dots v_{i_k}$.
We will propose an RDMA program \rdmaprog consisting of four processes $\process_1$, $\process_2$, $\process_3$, and $\process_4$, each on a different node, such that there is an execution of \rdmaprog reaching a final configuration (i.e., all four processes terminate) if and only if PCP has a solution for the aforementioned instance.
\begin{wrapfigure}[13]{r}{0.4\textwidth}
    \vspace{-\baselineskip}
    \centering
    \includegraphics[width=0.4\textwidth,alt={Process 1 sends indices i_j to both process 2 and process 3. Process 2 sends the letters of u_i_j to process 4. Process 3 sends the letters of v_i_j to process 4.}]{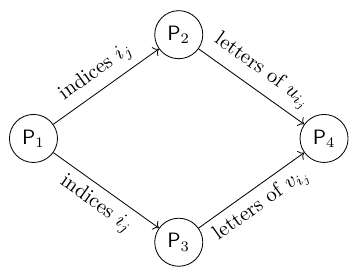}
    \caption{The PCP program.}
    \label{PCP:fig}
\end{wrapfigure}
The idea is that process $\process_1$ guesses a solution of PCP, which is a sequence of indices $i_1, i_2, \dots, i_k$, and sends the guessed indices to both $\process_2$ and $\process_3$.
For each received index $i$, process $\process_2$ ($\process_3$) sends the letters of $u_i$ ($v_i$) to process $\process_4$.
Process $\process_4$ checks that the sequences of letters received from $\process_2$ and $\process_3$ are identical.

In order to implement this idea, we need to ensure that the sequences of letters sent by the processes are received without any losses or duplications.
Since the set of letters is finite, the problem amounts to showing that we can transfer a sequence of letters over a finite alphabet from one process to another.
The main challenge is that remote writes propagate through the buffer system (cf.\ \autoref{sec:prelims}), and the receiving process has no built-in mechanism to detect when a new value arrives.
This can lead to missed or duplicated reads: the receiver may read before the value arrives, or read the same value multiple times.
We address this using an alternating bit protocol.

\autoref{noduplication:fig} illustrates the problem.
Process $\process_1$ wants to send a sequence of symbols from a finite alphabet $\{1, 2\}$ to process $\process_2$.
To send each symbol, process $\process_1$ uses a dedicated register for that symbol (register $a$ always holds $1$, register $b$ always holds $2$), and performs $\bar x := a$ or $\bar x := b$ accordingly.
Using constant values avoids complications from the delayed instantiation in buffer $B_1$ (see \autoref{sec:prelims}).
However, process $\process_2$ may not observe all sent symbols: in \autoref{noduplication:fig}, process $\process_1$ sends the sequence $1, 2$, but $\process_2$ may read only the final value $2$, missing the intermediate $1$.
To fix this synchronisation problem, we use an alternating bit protocol (ABP) between each pair of processes.
The ABP ensures that each sent symbol is received exactly once by the receiving process.
\begin{figure}[t!]
    \includegraphics[width=\textwidth,alt={Process 1 sends messages x = a and x = b to process 2. They are instantiated as x = 1 and x = 2. The first message is received by process 2, which changes the value of x to 1 and generates an acknowledgement. Then, the second message is received by process 2 which changes the value of x to 2 and generates another acknowledgement.}]{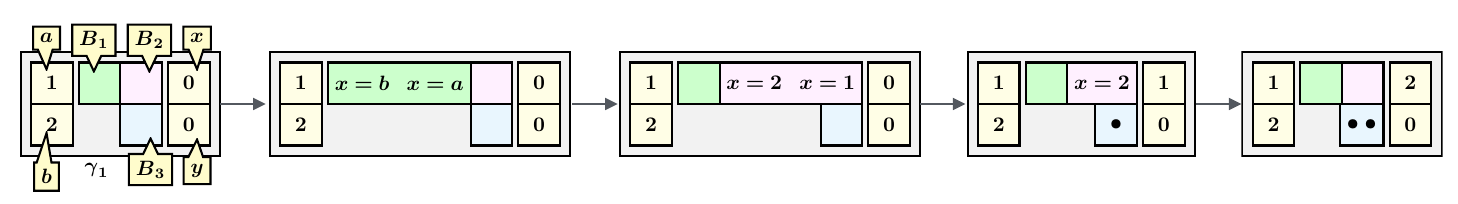}
    \caption{Message loss due to unsynchronised reads.}
    \label{noduplication:fig}
\end{figure}
\begin{figure}[b!]
    \includegraphics[width=\textwidth,alt={The configurations of the alternating bit protocol, where one process sends the message x = 2 followed by x = 1, whithout message loss or duplication.}]{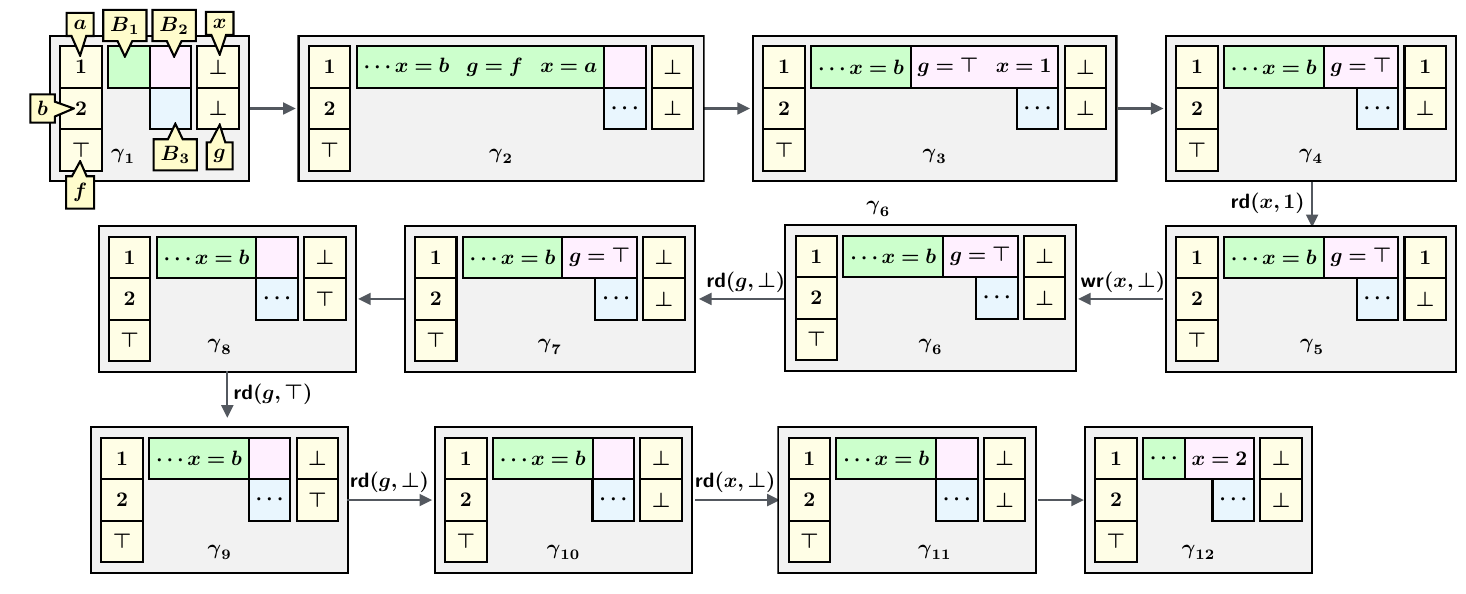}
    \caption{The Alternating Bit Protocol.}
    \label{AB:fig}
\end{figure}
The protocol uses two extra registers $f$ and $g$ at the sender and receiver processes, respectively.
Register $f$ at the sender process holds the alternating bit, while register $g$ at the receiver process observes the alternating bit.
The key invariant is that each symbol is accepted only when $g$ changes to the sender's bit, which prevents both losses and duplicate deliveries.
The protocol works as follows (see~\autoref{AB:fig}).
The initial value of $f$ and $g$ are $\top$ and $\bot$, respectively, while $a$ and $b$ hold the values from the alphabet (here $1$ and $2$).
In \autoref{AB:fig}, we assume that process $\process_1$ wants to send a word over the alphabet $\{1,2\}$ to process $\process_2$, of the form $\cdots21$, i.e., the word starts with $1$ followed by $2$.
The process $\process_1$ first performs $\bar x := a$ to send the value $1$ to $\process_2$, and then $\bar g := f$ to set the flag $g$ to $\top$, indicating that a new value $1$ has been sent (see configuration $\gamma_2$).
When these two messages reach the head of buffer $B_1$, they acquire their values ($1$ resp. $\top$), and are forwarded to buffer $B_2$ (cf. configuration $\gamma_3$).
When $x=1$ is at the head of buffer $B_2$, the node of $\process_2$ updates its local variable $x$ to $1$ (cf. configuration $\gamma_4$).
Once process $\process_2$ has read the value $1$, we need to ensure that it can detect and read the next value.
To that end, process $\process_2$ carries out a sequence of operations, where it first performs the local write operation $\wr(x, \bot)$ to reset $x$ to its initial value $\bot$.
Then it checks that the receiver-side flag is still unset by performing the local read operation $\rd(g, \bot)$ (cf. configuration $\gamma_5$).
This ensures that no additional write operations were performed in the previous steps.
Next, process $\process_2$ performs the local read operation $\rd(g, \top)$ to detect the next fresh delivery and ensure that the old value $1$ will not be read again (cf. configuration $\gamma_8$).
Finally, it resets the flag $g$ to its initial state $\bot$.

Using this protocol, we can implement the PCP reduction as follows (details can be found in Appendix~\ref{sec:code}).
Process $\process_1$ uses the ABP to send the guessed indices to both $\process_2$ and $\process_3$.
For each received index $i$, process $\process_2$ uses the ABP to send the letters of $u_i$ to $\process_4$, while $\process_3$ sends the letters of $v_i$.
Process $\process_4$ checks that the two received letter sequences are identical, alternating between receiving from $\process_2$ and $\process_3$.
The buffered communication is essential: since $\process_2$ and $\process_3$ may produce letters at different rates, the buffers decouple their execution, allowing the final sequences to match even when the corresponding words $u_i$ and $v_i$ have different lengths.
The program reaches a final configuration if and only if PCP has a solution.
This establishes the following result.

\begin{theorem}
\label{thm:undecidability}
    The reachability problem for RDMA programs is undecidable.
\end{theorem}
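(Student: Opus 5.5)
We reduce from PCP using the four-process program $\rdmaprog$ sketched above, and prove that $\rdmaprog$ reaches its final configuration if and only if the given PCP instance has a solution. Since PCP is undecidable~\cite{Post1946} and $\rdmaprog$ can be computed from the instance, undecidability of reachability follows. This holds even for the poll-free and fence-free fragment, because $\rdmaprog$ uses only local accesses, $\assume$, and remote writes.

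The core is a \emph{channel lemma} for the alternating bit protocol between a sender $\process_s$ and a receiver $\process_r$. Each transferred symbol $c$ is implemented as follows.
\begin{itemize}
\item The sender issues $\bar x := a_c$, where $a_c$ is a register that is never modified, followed by $\bar g := f$.
\item The receiver waits with $\assume(g = \top)$, reads $x$ with $\assume(x = c)$, resets $x := \bot$, checks $\assume(g = \top)$ again, and then resets $g := \bot$.
\end{itemize}
To make each message distinguishable from the previous one, the sender either toggles its flag value or uses two flag registers. In the handshake, the sender waits for the receiver's reset through a dedicated acknowledgement channel of the same shape.

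The lemma states that in every execution, the sequence of symbols accepted by $\process_r$ is a prefix of the sequence sent by $\process_s$. Conversely, for every sent sequence there is an execution in which exactly that sequence is accepted. I would prove the first direction from the operational semantics as follows.
\begin{itemize}
\item Remote writes from one process to one node pass through FIFO buffers. Since no rfence or poll is used, I need to check from the semantics of~\cite{AmbalDEK0R24} that writes from the same queue pair to the remote node are applied in program order.
\item Writes of constant registers instantiate to fixed values, so the delay in buffer $B_1$ is harmless.
\item An induction on the number of accepted symbols shows the invariant: when the receiver accepts its $k$-th symbol, the $k$-th data write has already been applied and the $(k{+}1)$-st has not overwritten $x$ before the read. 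The flag write follows the data write in FIFO order, and the sender does not issue the next pair before the acknowledgement arrives.
\end{itemize}
The second direction is easy: we schedule the sender and receiver in lockstep, in the SC-like order.

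With the lemma in hand, the correctness of $\rdmaprog$ is assembled as follows.

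If a solution $i_1,\dots,i_k$ exists, $\process_1$ guesses it and sends it on its channels to $\process_2$ and $\process_3$. These in turn send $u_{i_1}\cdots u_{i_k}$ and $v_{i_1}\cdots v_{i_k}$ to $\process_4$. Process $\process_4$ alternately receives one letter from each, compares them, and terminates after the end markers. Every process terminates.

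Conversely, suppose all processes terminate. By the lemma, $\process_2$ and $\process_3$ each received exactly the index sequence sent by $\process_1$. Termination requires the end marker, so the prefixes are complete. Then $\process_4$ received exactly $u_{i_1}\cdots u_{i_k}$ and $v_{i_1}\cdots v_{i_k}$ in full. Its letter-by-letter comparison succeeded and both end markers arrived together, so the two words are equal. We also need $k \ge 1$; we enforce this syntactically in $\process_1$.

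The main obstacle is the soundness half of the channel lemma, namely that a receiver can never accept a stale or skipped value. This needs a precise account of which reorderings the RDMA semantics permits among remote writes of one queue pair and between remote writes and the receiver's local writes to $x$ and $g$. In particular, the receiver's reset $x := \bot$ must not be overtaken in a way that erases a newer message. I would first establish that per-queue-pair remote writes are applied in order, and that a remote write and a later local write by the target node to the same location are coherence-ordered consistently with what the receiver observes. If the unconstrained semantics allows a sender to race ahead, I would add the explicit acknowledgement channel from receiver to sender, implemented symmetrically by remote writes. Then at most one data message is ever in flight, and the FIFO argument becomes immediate.
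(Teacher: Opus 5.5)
\textbf{The gap is in the channel lemma.} Your overall PCP architecture matches the paper's, but the channel it rests on does not work.

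The acknowledgement channel you adopt (the sender does not issue the next data/flag pair before the acknowledgement arrives) bounds the number of pending remote writes on every queue pair by a constant. Your program has no polls, so nothing else in a configuration is unbounded. With finitely many variables, values and control states, its reachability problem is therefore decidable, and it cannot encode PCP.

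Concretely, your direction ``solution $\Rightarrow$ termination'' fails. $\process_4$ consumes the letters of $\process_2$ and $\process_3$ alternately. So when $u_{i_1}\cdots u_{i_j}$ and $v_{i_1}\cdots v_{i_j}$ have very different lengths, $\process_2$ and $\process_3$ must be working on indices arbitrarily far apart. This lag can only be stored as unboundedly many in-flight messages. With stop-and-wait channels, $\process_1$ stays within a constant number of indices of each receiver, and the lockstep schedule deadlocks. This is exactly why the paper stresses that the buffered communication is essential.

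Dropping the acknowledgements does not rescue your receiver either. If the next data write $d_{k+1}$ lands before $\assume(x = c)$, the receiver accepts symbol $k{+}1$ and skips symbol $k$. The second $\assume(g = \top)$ cannot detect this, because the flag paired with $d_{k+1}$ need not have arrived yet. Toggling the flag, or using two flags, has the same defect.

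\textbf{The missing idea} is a receiver that stays correct while the sender races ahead arbitrarily. It should rely only on the FIFO order $d_1 < f_1 < d_2 < f_2 < \cdots$ in which the writes of one queue pair are applied (remote writes on the same queue pair are $\oppo$-ordered). In the paper, the sender simply issues $\bar x := a_c$; $\bar g := f$ for every symbol and never waits. After $\assume(x = c)$, the receiver performs $x := \bot$; $\assume(g = \bot)$; $\assume(g = \top)$; $g := \bot$; $\assume(x = \bot)$. In round $k$:
\begin{itemize}
\item Seeing $g = \bot$ after the read shows that $f_k$, hence $d_{k+1}$, had not yet been applied, so the value read came from $d_k$.
\item Waiting for $g = \top$ shows that $f_k$ has arrived.
\item The final $\assume(x = \bot)$ shows that $d_{k+1}$, hence $f_{k+1}$, had not arrived when $g$ was reset, so that reset cancels exactly $f_k$.
\end{itemize}
An induction on $k$ over the numbers of applied data and flag writes then gives exactly-once delivery with unbounded buffers. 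With that channel lemma in place, the rest of your assembly argument for the four-process program can be kept.
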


We note that the reduction uses only a restricted fragment of the RDMA model: it requires only local operations and remote writes, without remote reads or any synchronisation primitives such as $\poll$ or $\rfence$.
This demonstrates that reachability is undecidable even for this minimal fragment.

\section{The Declarative Semantics of RDMA}
\label{declarative:section}

We build upon the declarative semantics for RDMA developed by Ambal et al.~\cite{AmbalDEK0R24,DBLP:conf/esop/AmbalLR25}.
The CPU part of our model and the general notion of sequential consistency follow standard definitions from the literature.
The RDMA-specific aspects -- NIC operations, their labels and events, \pf, \nfo and derived relations, and the RDMA consistency conditions -- are adopted from their work.

An RDMA program $\rdmaprog$ is a collection of threads $(\process_t)_{t \in \Tid}$, where $\Tid = \set{1, \dots, M}$ is the set of thread identifiers.
We assume a function $n: \locset \to \nodeset$ that maps variables to their node.
An execution of a program generates events, which are labelled by their type.
Let $\labset$ consist of all labels of the following forms:
\begin{align*}
    & \text{NIC remote read:}   && \nrR(\bar y, v_r)    \quad  && \text{CPU local read:}       && \lR(x, v_r)          \\
    & \text{NIC local write:}   && \nlW(x, v_w, \bar n) \quad  && \text{CPU local write:}      && \lW(x, v_w)          \\
    & \text{NIC local read:}    && \nlR(x, v_r, \bar n) \quad  && \text{CPU compare-and-swap:} && \CAS(z, v_r, v_w)    \\
    & \text{NIC remote write:}  && \nrW(\bar y, v_w)    \quad  && \text{CPU poll:}             && \poll(\bar n)        \\
    & \text{NIC fence:}         && \nF(\bar n)
\end{align*}
An \emph{event} is defined as a triple $e = \tuple{\iota, t, l}$ where $\iota \in \Nat$ is the program counter, $t \in \Tid$ is the thread id, and $l \in \labset$ is a label.
We omit the program counter $\iota$ and the thread id $t$ when they are clear from the context, i.e. the event is identified by its label.
Some operations generate multiple events.
A remote read operation $x := \bar y$ generates a NIC remote read $\nrR(\bar y, v)$ followed by a NIC local write $\nlW(x, v, n(\bar y))$.
A remote write operation $\bar y := x$ works similarly: it generates a NIC local read $\nlR(x, v, n(\bar y))$ followed by a NIC remote write $\nrW(\bar y, v)$.
A compare-and-swap operation $x := \CAS(z, v_r, v_w)$ generates either (on success) $\CAS(z, v_r, v_w) \cdot \lW(x, v_r)$ or (on failure) $\lR(z, v) \cdot \lW(x, v)$ where $v \neq v_r$.
We inductively define $\pathset(\process) \subseteq \labset^*$ as the set of all finite label sequences that can be generated by the process $\process$.
The formal definition is standard and therefore omitted, but can be found in Appendix~\ref{sec:def_seq}.

\paragraph{Notation.}
We use $\iota(e)$, $t(e)$, $n(e)$, $\bar n(e)$, $\loc(e)$, $v_r(e)$ and $v_w(e)$ to refer to the program counter, thread id, local node, remote node, location, read value and write value of $e$, whenever applicable.

Let $E$ be a set of events.
With $E.\nrR, E.\nlW, \dots$ we refer to the subset of events of $E$ that have label $\nrR, \nlW, \dots$, respectively.
We define:
\begin{alignat}{3}
    &E.\CPU     &&= E.\lR \cup E.\lW \cup E.\CAS \cup E.\poll   \tag*{(CPU events)}         \\
    &E.\nW      &&= E.\nlW \cup E.\nrW                          \tag*{(NIC write events)}   \\
    &E.\writes  &&= E.\nlW \cup E.\nrW \cup E.\lW \cup E.\CAS   \tag*{(write events)}       \\
    &E.\reads   &&= E.\nlR \cup E.\nrR \cup E.\lR \cup E.\CAS   \tag*{(read events)}        \\
    &E.\Inst    &&= E \setminus (E.\nlW \cup E.\nrW)            \tag*{(instantaneous events)}
\end{alignat}
Given a relation $\rel \subset E \times E$ and a subset $E' \subset E$, we use $\rel^{-1} = \set{(e_2, e_1) \mid (e_1, e_2) \in \rel}$, $[E'] = \set{(e, e) \mid e \in E'}$ and $\rel\mid_{E'} = \rel \cap E' \times E'$.
We usually write $e_1 \to[\rel] e_2$ if $(e_1, e_2) \in \rel$.
If $\rel' \subset E \times E$ is another relation, then $\rel;\rel' = \set{(e_1, e_3) \mid \exists\ e_2 \in E: (e_1, e_2) \in \rel, (e_2, e_3) \in \rel'}$.
Lastly, we define the \emph{same-queue-pair} relation $\sqp = \set{ (e_1, e_2) \in E \times E \mid t(e_1) = t(e_2) \land \bar n(e_1) = \bar n(e_2)}$.

\begin{definition}[Trace]
    \label{def:trace}
    A \emph{trace} of an RDMA program $\rdmaprog = (\process_t)_{t \in \Tid}$ is a tuple $T = \tuple{ E, \po, \pf, \rf, \mo, \nfo }$ where:
    \begin{itemize}
        \item
        $E = E^0 \cup \bigcup_{t \in \Tid} E^t$ is a finite set of events such that:
        \begin{itemize}
            \item $E^0 = \set{ (\_, \_, \lW(x, v_x)) \mid x \in \locset }$ is the set of \emph{initialisation events}, where $v_x$ is the \emph{initial value} of $x$.
            \item
            For each $t \in \Tid$, there is $s_t \in \pathset(\process_t)$ and $0 \leq n \leq |s_t|$
            such that $E^t = \set{\tuple{\iota, t, s_t[\iota]} \mid 1 \leq \iota \leq n}$
            (i.e., the events of each thread form a prefix of some path in $\pathset(\process_t)$).
        \end{itemize}
        
        \item
        $\po \subset \bigcup_{t\in\Tid}(E^t \times E^t)$ is the \relName{program order}{po} relation on $E$, which is a strict total order on the events of each thread,
        uniquely defined such that $(e_1, e_2) \in \po$ if and only if $\iota(e_1) < \iota(e_2)$ and $t(e_1) = t(e_2)$.

        \item
        $\pf \subset (E.\nW \times E.\poll) \cap \po \cap \sqp$ is the \relName{polls-from}{pf} relation on $E$, that relates NIC write events to later (w.r.t. \po) poll events on the same queue pair.
        It is uniquely defined such that every poll event polls from exactly one NIC write, which must be the earliest (w.r.t. \po) non-polled one.
        That is, $(e_1, e_2) \in \pf$ if and only if $e_1 \in E.\nW$, $e_2 \in E.\poll$, $(e_1, e_2) \in \po\cap\sqp$ and:
        $$ \Sizeof{\Set{ e_1' \in E.\nW \mid (e_1', e_1) \in \po\cap\sqp }} = \Sizeof{\Set{ e_2' \in E.\poll \mid (e_2', e_2) \in \po\cap\sqp }} $$

        \item
        $\rf \subset E.\writes \times E.\reads$ is the \relName{reads-from}{rf} relation on $E$, which relates write events to read events on the same variable and with the same value.
        That is, $(e_1, e_2) \in \rf$ only if $\loc(e_1) = \loc(e_2)$ and $v_w(e_1) = v_r(e_2)$.
        Furthermore, each read event reads from exactly one write event.

        \item
        $\mo = \bigcup_{x \in \locset} \mo_x \subset E.\writes \times E.\writes$ is the \relName{modification order}{mo} relation on $E$.
        Each $\mo_x$ is a strict total order on the write events on location $x$, such that $E^0_x \times (E.\writes_x \setminus E^0_x) \subset \mo_x$,
        where $E^0_x = \set{(\_, \_, \lW(x, v_x))} \subseteq E^0$ and $E.\writes_x = \set{e \in E.\writes \mid \loc(e) = x}$.
        
        \item
        $\nfo \subset \sqp$ is the \relName{NIC flush order}{nfo} relation on $E$, which relates NIC events on the same queue pair in the following way:
        For all $(e_1, e_2) \in \sqp$, if $e_1 \in E.\nlR$ and $e_2 \in E.\nlW$, then exactly one of the pairs $(e_1, e_2)$ or $(e_2, e_1)$ is in $\nfo$.
        The same holds true for $e_1 \in E.\nrR$ and $e_2 \in E.\nrW$.
    \end{itemize}
\end{definition}

\begin{definition}[Derived Relations]
    Given a trace $T = \tuple{ E, \po, \pf, \rf, \mo, \nfo }$, we define:
    \begin{align*}
        \rb     =&\ (\rf^{-1};\mo)                                                                           \tag*{(\relName{reads-before}{rb})} \\
        \sc     =&\ (\po \cup \rf \cup \rb \cup \mo)^+                                                       \tag*{(\relName{sequential consistency}{sc})} \\
        \ippo    &\subset \po \text{ as in \autoref{fig:ippo-oppo}}                                          \tag*{(\relName{issue-preserved program order}{ippo})} \\
        \oppo    &\subset \po \text{ as in \autoref{fig:ippo-oppo}}                                          \tag*{(\relName{observation-preserved program order}{oppo})} \\
        \ib     =&\ (\ippo \cup \rf \cup \pf \cup \nfo)^+                                                    \tag*{(\relName{issued-before}{ib})} \\
        \ob     =&\ (\oppo \cup \rf \cup ([\nlW]; \pf) \cup \nfo \cup \rb \cup \mo \cup ([\Inst]; \ib))^+    \tag*{(\relName{observed-before}{ob})}
    \end{align*}
\end{definition}

The \rb relation ensures that if $w_1 \to[\rf] r$ and $w_1 \to[\mo] w_2$, then $r$ was observed before $w_2$.
To build intuition for the remaining relations, recall that NIC writes are non-instantaneous (they may be buffered), while events in $\Inst$ are instantaneous.
Hence we distinguish \emph{issue} order from \emph{observation} order.

\begin{figure}[bp]
    \centering
    \renewcommand{\tabcolsep}{6pt}
    \begin{tabular}{c|cccccc}
        \po \tikz[scale=0.2] \draw[->] (0,0) .. controls (0,1) .. (1.5,1);
                        & \CPU          & \nlR          & \nrW          & \nrR          & \nlW          & \nF           \\
        \hline
        \CPU            & \checkmark    & \checkmark    & \checkmark    & \checkmark    & \checkmark    & \checkmark    \\
        \nlR            & $\times$      & \sqp          & \sqp          & \sqp          & \sqp          & \sqp          \\
        \nrW            & $\times$      & $\times$      & \sqp          & \sqp          & \sqp          & \ippo: \sqp   \\
        \nrR            & $\times$      & $\times$      & $\times$      & $\times$      & \sqp          & \sqp          \\
        \nlW            & $\times$      & $\times$      & $\times$      & $\times$      & \sqp          & \ippo: \sqp   \\
        \nF             & $\times$      & \sqp          & \sqp          & \sqp          & \sqp          & \sqp
    \end{tabular}
    \caption{The relations \ippo and \oppo as subsets of \po.}
    \label{fig:ippo-oppo}
\end{figure}

The relation \ippo captures \po-edges that must be preserved at issue points, whereas \oppo captures those that must be preserved at observation points.
Thus \oppo is closer to the usual notion of preserved program order.
In \autoref{fig:ippo-oppo}, a \checkmark\ means that \po-edges between the two event types are contained in both \ippo and \oppo, whereas $\times$ means that \po-edges are not contained in either relation.
\sqp means that \po-edges between two events on the same queue pair are preserved in both relations and \ippo: \sqp means that these edges are only preserved in \ippo but not \oppo.
For example, a NIC write followed by a fence on the same queue pair is in \ippo but not necessarily in \oppo: the write is issued first, but may be observed later due to buffering.

The relation \ib collects ordering constraints at issue points (including \ippo, \rf, \pf, and \nfo), while \ob is the corresponding observation-level causality relation (including \oppo, \rf, $([\nlW];\pf)$, \nfo, \rb, \mo, and $([\Inst];\ib)$).
Intuitively, $([\Inst];\ib)$ lifts issue-order constraints to observation order when the source event is instantaneous.

We use the notation '$T.$' to refer to the various components of $T$, e.g. $T.E$ is the event set of $T$ and $T.\ib$ is the \ib-relation of $T$.

\begin{definition}[Consistency and Robustness]
    A trace $T$ is called
    \begin{itemize}
        \item
        \parbox{4cm}{\emph{SC-consistent}}
        if $T.\sc$ is acyclic (irreflexive).
        
        \item
        \parbox{4cm}{\emph{RDMA-consistent \cite{DBLP:conf/esop/AmbalLR25}}} 
        if both $T.\ib$ and $T.\ob$ are acyclic (irreflexive).
    \end{itemize}
    A program \rdmaprog is said to be \emph{robust} if all its RDMA-consistent traces are also SC-consistent.
    We call a trace \emph{violating} if it is RDMA-consistent but not SC-consistent.
\end{definition}

The definition of SC consistency is standard.
The definition of RDMA consistency is well motivated by Ambal et al. who show that the operational and declarative semantics of RDMA are equivalent.
Our first result is that the first condition of RDMA-consistency, acyclicity of the issued-before relation, is redundant:

\begin{theorem}
    If \ib is cyclic, then \ob is cyclic.
\end{theorem}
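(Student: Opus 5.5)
The plan is to use the fact that $\ob$ contains the lifted relation $[\Inst];\ib$. Any $\ib$-cycle that passes through an instantaneous event then gives an $\ob$-cycle directly. A cycle that avoids such events consists only of NIC write events, and I will show that such a cycle cannot exist.

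First, I write the $\ib$-cycle as a finite sequence of basic edges $e_0 \to e_1 \to \dots \to e_k = e_0$. Each step is taken from $\ippo \cup \rf \cup \pf \cup \nfo$. This is possible because $\ib$ is the transitive closure of this union.

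\emph{Case 1: some $e_i$ on the cycle lies in $E.\Inst$.} I rotate the cycle so that it starts and ends at $e_i$. This gives $(e_i, e_i) \in \ib$, and since $e_i \in \Inst$, also $(e_i,e_i) \in [\Inst];\ib \subseteq \ob$. Hence $\ob$ is reflexive, i.e. cyclic.

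\emph{Case 2: every $e_i$ lies in $E.\nlW \cup E.\nrW$.} I check each generating relation to see which edges can join two NIC write events.
\begin{itemize}
  \item An $\rf$-edge ends in a read event. Since $\nlW$ and $\nrW$ are not in $E.\reads$, $\rf$ contributes no edge.
  \item A $\pf$-edge ends in a poll event, which is not a NIC write, so $\pf$ contributes no edge.
  \item An $\nfo$-edge relates an $\nlR$ to an $\nlW$, or an $\nrR$ to an $\nrW$, in one of the two directions. In either case one endpoint is a read, so $\nfo$ contributes no edge.
\end{itemize}
So every edge of the cycle is an $\ippo$-edge. Since $\ippo \subseteq \po$, this would be a cycle in $\po$. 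That is impossible, because $\po$ relates events only when their program counters strictly increase, so $\po$ is a strict order. Hence Case 2 never occurs, and the theorem follows.

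The only step needing care is the case analysis in Case 2. I must confirm against the label definitions that $\nlW$ and $\nrW$ are neither reads nor polls. I must also confirm that $\nfo$ never relates two NIC writes, since by Definition~\ref{def:trace} its constraints involve only read/write pairs on the same queue pair. If $\nfo$ were intended to contain more pairs than the definition forces, the argument would still go through as long as those extra pairs lie inside $\po$ or contain an endpoint in $\Inst$. Apart from this check, the argument is purely combinatorial.
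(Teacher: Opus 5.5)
Your proof is correct. It ends at the same step as the paper, a reflexive pair $(e_i,e_i)\in[\Inst];\ib\subseteq\ob$, but it finds the instantaneous event $e_i$ through a different case split.

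The paper splits on whether every edge of the $\ib$-cycle already lies in $\ob$. If some edge does not, the table in Fig.~\ref{fig:ippo-oppo} gives $(\ippo\cup\rf\cup\pf\cup\nfo)\setminus\ob\subseteq(\ippo\setminus\oppo)\cup([\nrW];\pf)$. The paper then takes $e_i$ to be the target of that edge, which is a fence $\nF$ or a poll.

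You split instead on whether the cycle contains any instantaneous event. You rule out a cycle made only of NIC writes by endpoint typing: $\rf$ ends in reads, $\pf$ ends in polls, and $\nfo$ always has a read endpoint. Acyclicity of $\po$ then finishes Case~2.

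Your route needs no comparison of $\ippo$ with $\oppo$. It also proves a slightly stronger fact: every $\ib$-cycle passes through an instantaneous event, so $\ob$ is in fact reflexive. The paper's route uses only $\nfo\subseteq\ob$, so it does not depend on which pairs $\nfo$ contains.

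On the caveat you raise: the intended reading of Definition~\ref{def:trace}, and the explicit construction of $\nfo$ in $T(\tau)$, relate only $\nlR$/$\nlW$ and $\nrR$/$\nrW$ pairs. So your Case~2 stands as written. Even if $\nfo$ did contain extra pairs of NIC writes, you could still close Case~2. Every $\ippo$-edge between two NIC writes lies in $\oppo$, so such a cycle would consist entirely of $\ob$-edges.
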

\begin{proof}
    Consider a cycle in \ib, i.e. a sequence of events $e_0, e_1, \dots, e_n = e_0$ such that $e_{i-1} \to[\rel_i] e_i$ and $\rel_i \in \set{ \ippo, \rf, \pf, \nfo }$ for all $1 \leq i \leq n$.
    If all edges of the cycle are in \ob, we are done.
    So, assume that there exists at least one edge $e_{i-1} \to[\rel_i] e_i$ that is not contained in \ob.
    In particular, it holds that:
    \begin{multline*}
        (e_{i-1}, e_i)
        \in (\ippo \cup \rf \cup \pf \cup \nfo) \setminus \ob 
        \subseteq (\ippo \setminus \oppo) \cup ([\nrW];\pf) \\
        = ([\nrW];\po;[\nF]) \cup ([\nlW];\po;[\nF]) \cup ([\nrW];\pf)
    \end{multline*}
    The event $e_i$ must be a NIC fence or a poll event, which are both instantaneous.
    We conclude $(e_i, e_i) \in \ib$ and since $([\Inst]; \ib) \subseteq \ob$, it follows that $(e_i, e_i) \in \ob$, which means that \ib is cyclic.
\end{proof}

Henceforth, we only need to check acyclicity of $\ob$ when verifying RDMA-consistency.

\section{Normal Form for Violating Computations}
\label{sec:normal_form}

Our main result is that every non-robust RDMA program has a violating trace in a special \emph{normal form}:

\begin{theorem}[Normal Form]
\label{thm:normal_form_main}
    If an RDMA program \rdmaprog has a violating trace, then it has a violating trace whose event set can be partitioned as $E = E_1 \cup E_2$ such that:
    \begin{enumerate}
        \item The events within $E_1$ and $E_2$ are in an SC-consistent order, respectively. That is, $\sc|_{E_1}$ and $\sc|_{E_2}$ are acyclic.
        \item All \ob-edges between $E_1$ and $E_2$ go from $E_1$ to $E_2$, that is, for all $e_1 \in E_1$ and $e_2 \in E_2$, if $e_1$ and $e_2$ are related by \ob, then $e_1 \to[\ob] e_2$.
    \end{enumerate}
\end{theorem}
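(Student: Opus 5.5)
We start from a violating trace $T$ that is \emph{minimal}: among all violating traces of \rdmaprog, choose one with the fewest events. Since $T$ is violating, \ob is acyclic, so we fix a linearisation $<$ of $T.E$ that extends \ob. Since \sc is cyclic while \ob is not, some $\po$, $\rf$, $\rb$ or $\mo$ edge is not in \ob and points backwards in $<$. Such an edge is a \po-edge not in \oppo; $\rf$, $\rb$ and $\mo$ are contained in \ob. We take the \emph{pivotal} event $p$ to be the first event in $<$ whose prefix $\{e \mid e \le p\}$ is not SC-consistent. We then set $E_1 = \{e \mid e < p\}$ and $E_2 = \{e \mid p \le e\}$. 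Because $<$ extends \ob, every \ob-edge between the two blocks goes from $E_1$ to $E_2$, which gives condition~2. By the choice of $p$, $\sc|_{E_1}$ is acyclic. The real task is condition~1 for $E_2$.

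To achieve this we use minimality and prune $E_2$. We build a modified trace $T'$ that keeps $E_1 \cup \{p\}$ and the events needed to close one \sc-cycle through $p$. For each thread we cut its events at the last event that lies on this cycle or is a \po-predecessor of such an event, so that each thread is still a prefix of a path in $\pathset(\process_t)$. We must check three things. First, every read in $T'$ still has its writer: if the writer was removed, we redirect \rf to the \mo-maximal surviving write to that location, which can only remove \ob-edges. Second, \pf stays well defined, since \pf is determined by \po-prefixes and polls only poll earlier NIC writes. Third, $T'.\ob \subseteq T.\ob$ on the kept events, so $T'$ is still RDMA-consistent, and the chosen cycle survives, so $T'$ is still violating. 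Minimality then forces $T' = T$. This tells us that every event of $E_2$ is \po-before or on a single \sc-cycle that passes through $p$ and otherwise uses $E_1$.

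Next we reorder $E_2$. We want to show that a different choice of linearisation makes $E_2$ internally SC. Inside $E_2$ we take the order induced by $(\ob \cup \po)|_{E_2}$ and argue it is acyclic: any cycle in $\sc|_{E_2}$ would give a strictly shorter violating trace by dropping $p$ and its \po-successors, contradicting minimality. If that fails, we instead move the violation forward by choosing $p$ to be the \emph{last} event that creates a backward \po-edge. We then split at the source of that edge, so that $E_2$ consists only of events that are \ob-after it. Once $E_2$ is \sc-acyclic we are done, because the union of the two blocks still carries the original \sc-cycle and remains violating.

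The main obstacle is the pruning step. Removing writes and redirecting \rf can create new \rb-edges, which enter both \ob and \sc. Cutting a NIC write whose \nlR instantiation stays, or vice versa, can break the pairing of NIC events, and it also changes \pf for later polls. I would first handle this by always removing NIC operations as whole pairs and only truncating threads at operation boundaries. I would then check each case of \autoref{fig:ippo-oppo} to confirm that no new \ob-cycle is created, using the earlier theorem so that only \ob-acyclicity needs to be preserved.
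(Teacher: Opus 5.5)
\paragraph{The gap: condition~1 for $E_2$.} This step, your third paragraph, is never proved, and the claim you make for it is false. Take the running example of Section~\ref{sec:example}, which is a minimal violating trace. Every \ob-linearisation puts $\lW(x,2)$ before $\nlR(x,2,\bar 2)$. The only events that can precede $\nlR(x,2,\bar 2)$ are the initial events, $\nrR(\bar w,1)$ and $\lW(x,2)$, and these are SC-acyclic. Since $\nlR(x,2,\bar 2)\to[\po]\lW(x,2)\to[\rf]\nlR(x,2,\bar 2)$, your pivot is $p=\nlR(x,2,\bar 2)$. Then $E_2$ also contains $\nrW(\bar z,2)$ and $\nlW(y,1,\bar 2)$, and $\nlW(y,1,\bar 2)\to[\po]\nlR(x,2,\bar 2)\to[\nfo]\nlW(y,1,\bar 2)$ is a cycle in $(\ob\cup\po)|_{E_2}$.

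Your justification also fails on its own terms:
\begin{itemize}
\item It talks about \sc-cycles, not $(\ob\cup\po)$-cycles.
\item Even for \sc-cycles it does not go through. Nothing stops a cycle inside $E_2$ from using $p$ or its \po-successors, in which case dropping them gives no smaller violating trace. Dropping events also needs the full repair argument, which you have not given.
\item Choosing another linearisation cannot help. For an RDMA-consistent $T$ and any \ob-linearisation $\tau$ we have $T(\tau)=T$, so \rf, \mo, \nfo, and hence \sc and \ob, do not depend on the linearisation; only your split point does.
\end{itemize}
To get one order per part that respects both \ob and \sc, you must change the trace itself. In the example this means reversing that \nfo-edge, which is exactly what the paper's $T''=T(\tau'')$ does. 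The fallback ``if that fails'' is not an argument.

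\paragraph{The pruning step.} It is unsound as stated, for three reasons:
\begin{itemize}
\item Redirecting \rf to the \mo-maximal surviving write can create an \ob-cycle, because that write may lie \ob-after the read.
\item It adds new \rb-edges.
\item It changes NIC read values, and these changes must propagate to the paired NIC writes and their readers.
\end{itemize}
So $T'.\ob\subseteq T.\ob$ remains unproved. Even if it held, knowing that every event of $E_2$ is \po-below $E_1\cup\{p\}$ says nothing about acyclicity inside $E_2$.

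\paragraph{The missing idea.} You need to use minimality to obtain a \emph{global} SC witness. The paper does this as follows:
\begin{itemize}
\item Remove a single maximal event $e$ from an \ob-linearisation $\tau=\tau_1\,e\,\tau_2$. Maximal means \po-maximal and not \ib-before any CPU event.
\item Redirect each dependent NIC read to the last preceding write in $\tau$, and propagate the changed values. Changed values occur only \ib-after $e$.
\item Prove that the result $T'$ is an RDMA-consistent trace with $T'.\ob\subseteq T.\ob$. By minimality, $T'$ is then SC-consistent, with an SC-linearisation $\sigma$.
\end{itemize}
Reading $\sc|_{E_i}$, as you do, as generated by the base edges inside $E_i$, this witness already gives both conditions for $T$ itself with $E_1=\tau_1\cup\{e\}$ and $E_2=\tau_2$. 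Base \sc-edges inside each half survive into $T'$, and $e$ has no base \sc-edge back into $\tau_1$.

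The paper then goes further. It sorts each half by $\sigma$ and rebuilds the trace as $T(\tau'')$ with $\tau''=\tau_1\downarrow_\sigma e\,\tau_2\downarrow_\sigma$, which changes \nfo. It proves that all $\ob''$-edges are forward (the delicate case is $([\Inst];\ib)$) and that $T''.\sc=T.\sc$. The result is a single order that is both an \ob-linearisation and \sc-ordered within each half. This is what the instrumentation needs, and it is what your $(\ob\cup\po)|_{E_2}$ goal was implicitly aiming for.
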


The key insight is that we can simplify any violating trace by identifying a \emph{maximal event} -- one that could be the last issued event -- and reorganising the execution around it.
This reorganisation removes unnecessary delays while preserving both RDMA-consistency and the SC-violation.

This chapter presents the construction of the normal form trace and establishes the correctness of \autoref{thm:normal_form_main}.
The next chapter shows how this normal form enables a practical robustness decision procedure: we can systematically search for violating traces by building up $E_1$ and $E_2$ in parallel during a single SC-simulation, non-deterministically assigning each generated event to either $E_1$ or $E_2$.

\subsection{Example}
\label{sec:example}
\begin{wrapfigure}[6]{r}{0.3\textwidth}
    \vspace{-2\baselineskip}
    \setlength{\tabcolsep}{4pt}
    \centering
    \begin{tabular}{l l||l}
                    & $x = 1$ & $z = 0$ \\
                    & $y = 0$ & $w = 1$ \\
        \hline
        \texttt{1:} & $y := \bar w$ & \\
        \texttt{2:} & $\bar z := x$ & \\
        \texttt{3:} & $x := 2$      & \\
    \end{tabular}
\end{wrapfigure}
Consider the program \rdmaprog on the right.
It has two nodes with two variables each: node 1 has variables $x$ and $y$, node 2 has variables $z$ and $w$.
We consider only a single thread on node 1, with three instructions: a remote read of $w$ into $y$, a remote write of $x$ into $z$, and a local write of $2$ into $x$.
This program is not robust: The read on $x$ in line 2 can be delayed after the write on $x$ in line 3 due to the RDMA memory model, causing it to read the value $2$ instead of the initial value $1$.
This gives rise to a trace $T = \tuple{ E, \po, \pf, \rf, \mo, \nfo }$ with event set
$$E = E_0~\cup~\set{ \nrR(\bar w, 1), \nlW(y, 1, \bar 2), \nlR(x, 2, \bar 2), \nrW(\bar z, 2), \lW(x, 2) }$$
and relations:

\noindent\makebox[\textwidth][c]{%
    \begin{tikzpicture}[xscale=2.7,yscale=-2]
        \node (e1) at (1,0) {$\nrR(\bar w, 1)$};
        \node (e2) at (2,0) {$\nlW(y, 1, \bar 2)$};
        \node (e3) at (3,0) {$\nlR(x, 2, \bar 2)$};
        \node (e4) at (4,0) {$\nrW(\bar z, 2)$};
        \node (e5) at (5,0) {$\lW(x, 2)$};
        
        \draw[->] (e1) -- node[above] {\ib, \ob} node[below] {\po, \sc} (e2);
        \draw[->] (e2) to[bend left] node[below] {\po, \sc} (e3);
        \draw[->] (e3) to[bend left] node[below] {\nfo} node[above] {\ib, \ob} (e2);
        \draw[->] (e3) to[bend left] node[below] {\po, \sc} node[above] {\ib, \ob} (e4);
        \draw[->] (e4) to[bend left] node[below] {\po, \sc} (e5);
        \draw[->] (e5) to[bend left=45] node[below] {\rf, \sc} node[above] {\ib, \ob} (e3);
    \end{tikzpicture}%
}
The set of initial events $E_0$ is not shown in the figure, but since these events have only outgoing edges, they cannot be part of any cycle.
$T$ is RDMA-consistent, since $\ob$ is acyclic, and violating, since $\sc$ is cyclic.
Furthermore, it is minimal, since removing the last event (w.r.t. \po) makes it non-violating.
Note that we cannot remove any other event, since this would not yield a valid trace of \rdmaprog.

We choose the maximal event $e = \lW(x, 2)$ and consider one possible ordering of when events happen, called a \emph{linearisation} $\tau = \tau_1~e~\tau_2$:
$$\tau = \nrR(\bar w, 1)~\lW(x, 2)~\nlR(x, 2, \bar 2)~\nrW(\bar z, 2)~\nlW(y, 1, \bar 2)$$
Removing $e$ from $\tau$ and adjusting the values of events that depended on it yields the event sequence $\tau'$:
$$\tau' = \nrR(\bar w, 1)~\nlR(x, 1, \bar 2)~\nrW(\bar z, 1)~\nlW(y, 1, \bar 2)$$
The corresponding trace $T'$ on the event set $E' = E\setminus\set{e}$ is:

\noindent\makebox[\textwidth][c]{%
    \begin{tikzpicture}[xscale=3,yscale=-2]
        \node (e1) at (1,0) {$\nrR(\bar w, 1)$};
        \node (e2) at (2,0) {$\nlW(y, 1, \bar 2)$};
        \node (e3) at (3,0) {$\nlR(x, 1, \bar 2)$};
        \node (e4) at (4,0) {$\nrW(\bar z, 1)$};
        
        \draw[->] (e1) -- node[above] {\ib, \ob} node[below] {\po, \sc} (e2);
        \draw[->] (e2) to[bend left] node[below] {\po, \sc} (e3);
        \draw[->] (e3) to[bend left] node[below] {\nfo} node[above] {\ib, \ob} (e2);
        \draw[->] (e3) -- node[below] {\po, \sc} node[above] {\ib, \ob} (e4);
    \end{tikzpicture}%
}
We observe that the values of the read and write events of the instruction $z := x$ have been adjusted, since they now read and write the initial value instead of the value written by the removed event.
$T'$ is both RDMA-consistent and SC-consistent, since $T'.\ob$ and $T'.\sc$ are acyclic, respectively.

Since $T'$ is SC-consistent, we can find an ordering of its events that respects all SC constraints:
$$\sigma = \nrR(\bar w, 1)~\nlW(y, 1, \bar 2)~\nlR(x, 1, \bar 2)~\nrW(\bar z, 1)$$
We now construct the normal form trace $T''$ by inserting the maximal event $e$ back into this SC-consistent ordering.
Specifically, we place events that happened before $e$ in $\tau$ (only $\nrR(\bar w, 1)$ in this case), then $e$ itself, then the remaining events in their SC-consistent order:
$$\tau'' = \nrR(\bar w, 1)~\lW(x, 2)~\nlW(y, 1, \bar 2)~\nlR(x, 2, \bar 2)~\nrW(\bar z, 2)$$
The linearisation $\tau''$ corresponds to the trace $T''$:

\noindent\makebox[\textwidth][c]{%
    \begin{tikzpicture}[xscale=2.7,yscale=-2]
        \node (e1) at (1,0) {$\nrR(\bar w, 1)$};
        \node (e2) at (2,0) {$\nlW(y, 1, \bar 2)$};
        \node (e3) at (3,0) {$\nlR(x, 2, \bar 2)$};
        \node (e4) at (4,0) {$\nrW(\bar z, 2)$};
        \node (e5) at (5,0) {$\lW(x, 2)$};
        
        \draw[->] (e1) -- node[above] {\ib, \ob} node[below] {\po, \sc} (e2);
        \draw[->] (e2) -- node[below] {\po, \sc, \nfo} node[above] {\ib, \ob} (e3);
        \draw[->] (e3) to[bend left] node[below] {\po, \sc} node[above] {\ib, \ob} (e4);
        \draw[->] (e4) to[bend left] node[below] {\po, \sc} (e5);
        \draw[->] (e5) to[bend left=45] node[below] {\rf, \sc} node[above] {\ib, \ob} (e3);
    \end{tikzpicture}%
}
It is RDMA-consistent and uses the same set of events as $T$.
Furthermore, it has the same SC relation as $T$ and is therefore also violating.

Note that although $T$ and $T''$ have exactly the same events, their ordering relations differ.
In particular, the $\nfo$-relation between the NIC local write $\nlW(y, 1, \bar 2)$ and the NIC local read $\nlR(x, 1, \bar 2)$ has been reversed.
Intuitively, the normal form eliminates the unnecessary delay of $\nlW(y, 1, \bar 2)$ after the read and write events of the later instruction $z := x$.

\subsection{Definitions and Properties}
\begin{definition}[Linearisation of a Trace]
\label{def:linearisation_of_trace}
    A \emph{linearisation} of an RDMA-consistent trace $T = \tuple{ E, \po, \pf, \rf, \mo, \nfo }$ is a strict total order on the events $E = \set{e_1, \dots, e_n}$ that is an extension (a superset) of $T.\ob$.
    We usually write it as a sequence $\tau = e_1~\dots~e_n$.
    Similarly, a \emph{linearisation} of an SC-consistent trace $T$ is a strict total order extending $T.\sc$.
\end{definition}
Given a sequence of events $\tau = e_1~\dots~e_n$, we write $e_i <_\tau e_j$ if $i < j$.
Furthermore, we define $T\of\tau = \tuple{ E, \po, \pf, \rf, \mo, \nfo }$, where:
\begin{itemize}
    \item $E = \set{ e_1, \dots, e_n}$
    \item $\po$ and $\pf$ are defined as in \autoref{def:trace} with respect to the event set $E$
    \item $e \to[\rf] e'$ if and only if $e <_\tau e'$, $e$ is a write event, $e'$ is a read event on the same variable, and there is no other write event on the same variable between $e$ and $e'$ in $\tau$
    \item $e \to[\mo] e'$ if and only if $e <_\tau e'$ and both $e$ and $e'$ are write events on the same variable
    \item $e \to[\nfo] e'$ if and only if $e <_\tau e'$, both $e$ and $e'$ are events belonging to the same queue pair, exactly one of them is a read event and one of them is a write event, and they are either both local or both remote NIC events
\end{itemize}
$T(\tau)$ is a (not necessarily RDMA-consistent) trace of \rdmaprog if and only if the event set $E$, the polls-from relation \pf and the reads-from relation \rf are \emph{well-formed}, meaning they are in the form as required in \autoref{def:trace}.
In particular, there must be at least as many NIC writes as polls on each queue pair at any time, and each pair of events related by \rf must have matching values.
All other requirements for relations are automatically fulfilled.

\begin{definition}[Maximal Event]
\label{def:maximal_event}
    Given a trace $T$, an event $e \in E$ is \emph{maximal} if it is $\po$-maximal and there is no CPU-event $e'$ such that $e \to[\ib] e'$.
\end{definition}
\begin{lemma}
    Any RDMA-consistent trace $T$ has at least one maximal event $e$.
\end{lemma}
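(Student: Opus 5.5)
The plan is to look only at the last events of the threads and use acyclicity of \ib. Let $T$ be RDMA-consistent. Then $T.\ob$ is acyclic, so by the previous theorem (a cycle in \ib yields a cycle in \ob) $T.\ib$ is acyclic as well.

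First the degenerate case. Suppose no thread contributes events, so $E = E^0$. Any initialisation event is then maximal. It is $\po$-maximal vacuously, since \po only relates events of the same thread. There is also no CPU event $e'$ with $e \to[\ib] e'$, because $E^0$ consists only of write events and there are no thread events at all. (This assumes $\locset \neq \emptyset$, so that $E$ is non-empty.)

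Otherwise, let $L$ be the set of events $l_t$, one for each thread $t$ with $E^t \neq \emptyset$, where $l_t$ is the \po-last event of $E^t$. Each $l_t$ is \po-maximal, so it suffices to find some $l_t \in L$ with no \ib-edge to a CPU event. Assume for contradiction that every $l_t \in L$ satisfies $l_t \to[\ib] c_t$ for some CPU event $c_t$. This $c_t$ is not an initialisation event, because those are $\lW$ events with no incoming edges, so it belongs to some thread $t'$.

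Now either $c_t = l_{t'}$, or $c_t \to[\po] l_{t'}$. In the second case, \autoref{fig:ippo-oppo} shows that every \po-edge out of a CPU event lies in \ippo, so $c_t \to[\ippo] l_{t'}$. Since $\ippo \subseteq \ib$ and \ib is transitive, in both cases $l_t \to[\ib] l_{t'}$. Setting $f(l_t) = l_{t'}$ gives a map $f : L \to L$ with $l \to[\ib] f(l)$ for every $l \in L$. Because $L$ is finite, iterating $f$ from any element eventually repeats an element. This produces an \ib-path from some $l$ back to itself, so $(l,l) \in \ib$, contradicting acyclicity of \ib. Hence some $l_t \in L$ is maximal.

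The only delicate point is the step from a CPU event $c_t$ to its thread's last event $l_{t'}$. It relies on the table entry stating that \po-edges leaving CPU events are always preserved in \ippo. It also needs the separate handling of initialisation events, which lie outside \po and cannot be targets of \ib-edges.
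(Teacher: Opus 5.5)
Your proof is correct and takes essentially the same route as the paper. Both arguments go from a \po-maximal event along an \ib-edge to a CPU event, then along the \po-path to that thread's last event, which lies in \ippo because every \po-edge leaving a CPU event is preserved; finiteness then forces an \ib-cycle. Your separate treatment of $E = E^0$ and your observation that initialisation events cannot be targets of \ib-edges add care the paper leaves implicit, and acyclicity of \ib is already part of the definition of RDMA-consistency, so citing the previous theorem is harmless but unnecessary.
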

\begin{proof}
    Assume that there is no maximal event.
    Construct a sequence of events $e_1, e'_1, e_2, e'_2, \dots$ as follows.
    Let $e_1$ be any $\po$-maximal event.
    For all $i \in \Nat$, let $e'_i$ be a CPU-event such that $e_i \to[\ib] e'_i$ and let $e_{i+1}$ be a $\po$-maximal event such that $e'_i \to[\po^*] e_{i+1}$.
    Because there are only finitely many $\po$-maximal events, this sequence must eventually repeat and form a cycle.
    Since $\po$-edges from CPU-events to later events are in $\ippo$ and therefore in $\ib$, the cycle is completely contained in $\ib$, which contradicts that $T$ is RDMA-consistent.
\end{proof}

\subsection{Construction}
Consider an RDMA program \rdmaprog that is not robust.
Choose a violating trace $T$ that is minimal in the sense that any trace with fewer events is not violating.
In particular, $T = \tuple{ E, \po, \pf, \rf, \mo, \nfo }$ is RDMA-consistent and $T.\sc = \tuple{ E, \po, \rf, \rb, \mo }$ is cyclic.
Let $e$ be a maximal event and let $\tau = \tau_1~e~\tau_2$ be a linearisation of $T$.
We remove the maximal event $e$ from $\tau$ to obtain $\tau'$ in the following way.
For each event $\bar e \in E \setminus \set{e}$ in $\tau$:
\begin{itemize}
    \item If $\bar e$ is a NIC local read $\nlR(x, v_r, \bar n)$ or a NIC remote read $\nrR(\bar y, v_r)$,
    we replace the value $v_r$ with the value of the last write event on $x$ or $\bar y$, respectively.
    \item If $\bar e$ is a NIC remote write $\nrW(\bar y, v_w)$ or a NIC local write $\nlW(x, v_w, \bar n)$,
    we replace the value $v_w$ with the value of the local or remote read coming immediately before $\bar e$ in $\po$.
    \item Otherwise we keep the event $\bar e$ unchanged.
\end{itemize}

We claim that $T' = T(\tau')$ is an RDMA-consistent trace of \rdmaprog.
Since $T$ was minimal, it follows that $T'$ is not violating and therefore $T'.\sc$ is acyclic.
Let $\sigma$ be a linearisation of $T'.\sc$.
Define $\tau'' = \tau_1\downarrow_\sigma~e~\tau_2\downarrow_\sigma$
where $\tau_i\downarrow_\sigma$ are the events of $\tau_i$ in the order in which they appear in $\sigma$.

\begin{theorem}
\label{thm:normal_form}
    $T'' = T(\tau'')$ is a violating trace of \rdmaprog.
\end{theorem}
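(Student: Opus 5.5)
We have to show three things about $T'' = T(\tau'')$: it is a well-formed trace of \rdmaprog, it is RDMA-consistent, and it is not SC-consistent. In $T''$ every event carries the value it had in $T$, including those of $\tau_2$ that were changed in $\tau'$. So we must also check that the values $T(\tau'')$ assigns via $\rf$ agree with those labels.

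\textbf{Reads-from and values.} For events of $\tau_1$ nothing changes. Nothing was removed in front of them, so $\tau'$ and $\tau$ agree on $\tau_1$. Since $\tau_1\downarrow_\sigma$ is an ordering of $\tau_1$ consistent with $T'.\sc$, each read in $\tau_1$ still reads from the same write: $\sc$ contains $\rf$, $\rb$ and $\mo$, so the last write before a read is preserved under any $\sc$-linearisation.

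For events of $\tau_2$ we use that $\sigma$ restricted to $\tau_2$ respects $T'.\rf$, $T'.\mo$ and $T'.\rb$. Hence a read in $\tau_2$ reads from the same write in $\tau''$ as in $\tau'$, except when the last write on its location in $\tau'$ lies in $\tau_1$ and $e$ writes the same location. In that case it now reads from $e$. Here we need a case analysis on the type of $e$. Since $e$ is $\po$-maximal and $\ib$-maximal towards CPU events, $e$ is either a CPU write ($\lW$/$\CAS$) or a NIC write that is the last event of its thread.

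We then argue that in the original $T$ the same read also read from $e$ or from a later write. The value reverts to the one it had in $T$, because all $\tau_2$ events already come after $e$ in $\tau$. Concretely, we show by induction along $\tau''$ that every read/write value equals its value in $T$, which makes $T''$ well-formed with the original labels. The $\pf$ and $\po$ relations are unaffected, since per-thread order is preserved: $\tau_1$ and $\tau_2$ are both ordered consistently with $\po$ inside $\sigma$, and $\po$ was respected by $\tau$.

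\textbf{RDMA-consistency.} We show $\tau''$ extends $T''.\ob$, i.e. every generator edge goes forward in $\tau''$. For $\rf$, $\mo$, $\rb$ and $\nfo$ this holds by the definition of $T(\tau'')$. The $\oppo$ edges and $[\nlW];\pf$ edges are contained in $\po$ or $\sc$, so they go forward within each block, and across blocks they go $\tau_1 \to \tau_2$, since $\tau$ respected them.

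The delicate generator is $[\Inst];\ib$, because $\ib$ contains $\ippo$ edges that are not in $\sc$ (they are in $\po$, though) and $\nfo$ edges, which were recomputed. We would show that $\ib$-paths of $T''$ from an instantaneous event decompose into $\po$, $\rf$, $\pf$ and forward $\nfo$ steps, each of which is forward in $\tau''$, and then conclude acyclicity. This is the main obstacle. The expected trouble is cross-block edges into $e$ or out of $e$: for instance a $\tau_2$ event $\ib$-before $e$, or an $\nfo$ edge re-oriented as in the example. Maximality of $e$ ($e$ has no $\ib$-successor CPU event, and $e$ is last in its thread) is exactly what rules out backward edges involving $e$.

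\textbf{Violation.} $T''$ has the same events and values as $T$. Its $\sc$ restricted to $E\setminus\{e\}$ matches $T'.\sc$ up to the reads redirected to $e$, and these restore the $T$-relations. The $\sc$-cycle of $T$ must pass through $e$, because $T'$ is SC-consistent by minimality. We would show that the edges of this cycle that involve $e$ reappear in $T''$: $\rf$ from $e$ to $\tau_2$ reads, and $\mo$/$\rb$ edges determined by the position of $e$ relative to $\tau_1$ and $\tau_2$. So $T''.\sc$ contains a cycle and $T''$ is violating.
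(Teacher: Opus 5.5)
There is a genuine gap at the point you yourself call ``the main obstacle'', the $[\Inst];\ib$ generator of $\ob''$. It rests on a false premise that also appears in your remark that ``$\po$ was respected by $\tau$''. The order $\tau$ linearises only $\ob$, not $\po$. The edges $[\nrW];\ippo;[\nF]$, $[\nlW];\ippo;[\nF]$ and $[\nrW];\pf$ lie in $\ib$ but not in $\ob$. So a NIC write in $\tau_2$ can be $\po$-before a fence or poll of the same queue pair lying in $\tau_1$, and that step stays backward in $\tau''$. Your planned claim that every step of an $\ib''$-path is forward in $\tau''$ is therefore false. What must be shown instead is that such a backward step never occurs on an $\ib''$-path starting at an instantaneous event. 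Maximality of $e$ does not achieve this. The new $\ib''$-paths come from $\nfo$ edges that $\tau''$ reorients, and these need not touch $e$: in the example the flipped edge is between $\nlW(y,1,\bar 2)$ and $\nlR(x,2,\bar 2)$. The paper's argument for this step does not use maximality.

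The missing argument is as follows. Take an $\ib''$-path $a=e_0,\dots,e_n=b$ with $a\in\Inst$, $a$ in $e\,\tau_2\downarrow_\sigma$ and $b$ in $\tau_1\downarrow_\sigma e$, and let $e_k\to e_{k+1}$ be its first backward step. Since $\oppo''=\oppo$, $\rf''=\rf$, $\pf''=\pf$ and $\nfo''$ is forward by construction, this step is one of the three types above. Hence it lies in $\ib$ and $e_{k+1}$ is a fence or poll.

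Let $e_{\ell-1}$ be the last instantaneous event before $e_k$. Then $e_\ell,\dots,e_k$ are NIC writes of one queue pair linked by $\ippo$. If $e_{\ell-1}\to e_\ell$ were in $\ib$, then $e_{\ell-1}\to[\ib] e_{k+1}$ would be a backward $[\Inst];\ib$ edge of $T$ in $\tau$. So $e_{\ell-1}\to e_\ell$ is an $\nfo''$ edge that is reversed in $T$. Both endpoints lie in one block, where $\sigma$ follows $\po$, so $e_{\ell-1}\to[\po] e_\ell$. A local pair is excluded because $[\nlR];\po;[\nlW]$ on a queue pair is in $\oppo$. Hence $e_{\ell-1}$ is an $\nrR$ and $e_\ell$ an $\nrW$.

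Now let $w$ be the $\nlW$ matching this $\nrR$. In the fence case it gives $e_{\ell-1}\to[\ippo] w\to[\ippo] e_{k+1}$. In the poll case it gives $e_{\ell-1}\to[\ippo] w\to[\pf] p\to[\ippo] e_{k+1}$ for some poll $p$. Either way this is again a backward $[\Inst];\ib$ edge in $\tau$, a contradiction.

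The rest of your plan is essentially sound, with two adjustments:
\begin{itemize}
\item A $\tau_2$-read whose $\tau'$-source lies in $\tau_1$ reads from $e$ in $T$ exactly when $e$ writes its location. It never reads ``from a later write'', since $\tau$ and $\tau'$ order $\tau_2$ identically; if it did, the original labels would not match.
\item Once $\po''=\po$, $\rf''=\rf$ and $\mo''=\mo$, you get $T''.\sc=T.\sc$ directly, without tracing the cycle through $e$.
\end{itemize}
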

The full proof can be found in Appendix~\ref{sec:proofs}.
It proceeds in two phases, corresponding to the two trace transformations performed on the original trace $T$.
In the first phase, we establish that after removing the maximal event $e$ from $\tau$ and adjusting dependent NIC read/write values, the resulting trace $T'$ remains both well-formed and RDMA-consistent.
In the second phase, we analyse the trace $T''$ obtained by reinserting $e$ into $T'$.
The key structural observation is that several fundamental relations (\po, \ippo, \oppo, \pf, \rf, \mo, \rb) remain unchanged between $T$ and $T''$.
Building on this, we prove that $T''$ is RDMA-consistent by showing all $\ob''$-edges follow the order given by $\tau''$, that is, they relate events earlier in $\tau''$ to events later in $\tau''$.
Finally, we establish that $T.\sc = T''.\sc$, completing the proof that $T''$ violates the same SC properties as $T$ while being RDMA-consistent.

This establishes \autoref{thm:normal_form_main}: the partition required by the theorem is given by $E_1 = \set{e} \cup \set{e' \in E'' \mid e' \in \tau_1}$ and $E_2 = \set{e' \in E'' \mid e' \in \tau_2}$.
The SC-consistency within each partition follows from the construction, which orders events in $\tau_1$ and $\tau_2$ according to the SC-linearisation $\sigma$ of $T'$.
Note that due to maximality, $e$ is \po-later than any event in $\tau_1$, which means it can be added to $E_1$ while preserving SC-consistency.
The directionality of \ob-edges between the two partitions is guaranteed by the proof that all $\ob''$-edges respect the ordering of $\tau''$.
\section{An EXPSPACE-Algorithm for Robustness}

\subsection{Overview}

The previous section established that every non-robust RDMA program has a violating trace whose event set can be partitioned as $E = E_1 \cup E_2$ where events within each partition are SC-consistent and all \ob-edges between partitions go from $E_1$ to $E_2$.
This chapter leverages this normal form to establish an upper bound for the computational complexity of the robustness problem.

We show that RDMA-robustness can be reduced to reachability in a finite-state program augmented with net counters ($\in \Nat$), which implies membership in \expspace:
This follows immediately from the fact that the state reachability problem for Vector Addition Systems with States (VASS) is in \expspace \cite{Rackoff78}.
For RDMA programs that do not use poll operations, the reduction yields a purely finite-state program without counters.
Since state reachability for finite-state programs is in \pspace~\cite{SistlaC85}, we obtain \pspace-membership for the special case.

\begin{theorem}[Upper Bounds for RDMA-Robustness]
\label{thm:complexity_robustness}
    \begin{enumerate}
        \item RDMA-robustness is in \expspace in the general case.
        \item RDMA-robustness for programs without poll operations is in \pspace.
    \end{enumerate}
\end{theorem}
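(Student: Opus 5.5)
The plan is to reduce non-robustness, via \autoref{thm:normal_form_main}, to control-state reachability in an instrumented SC program $\rdmaprog'$. The program $\rdmaprog'$ has finite control plus some counters, and it needs counters only when \rdmaprog contains $\poll$. We then invoke Rackoff's \expspace bound for VASS coverability in the general case, and the \pspace bound for finite-state reachability in the poll-free case.

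\textbf{Step 1: guess events in SC order and tag them.} By \autoref{thm:normal_form_main}, \rdmaprog is non-robust iff it has a violating trace with $E = E_1 \cup E_2$, where $\sc|_{E_1}$ and $\sc|_{E_2}$ are acyclic and every \ob-edge between the parts goes from $E_1$ to $E_2$. The instrumented program simulates \rdmaprog thread by thread under SC. Each generated event is tagged nondeterministically as belonging to $E_1$ or $E_2$, and we use two separate memory copies, $\mathit{mem}_1$ and $\mathit{mem}_2$.

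\emph{Phase 1 (events of $E_1$).} The simulation runs with SC semantics on $\mathit{mem}_1$. Events of each thread are generated in program order within each NIC sub-stream, but NIC events may be tagged for phase 2 while later CPU events still go to phase 1. These are exactly the pending buffer entries.

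\emph{Phase 2 (events of $E_2$).} The simulation continues from the final $\mathit{mem}_1$, i.e.\ all writes of $E_1$ are visible, which is consistent with $\ob$-edges going from $E_1$ to $E_2$.

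The bookkeeping needed to check the \ob-direction constraint is finite: per thread and queue pair we record which event kinds were deferred to $E_2$, using the \ippo/\oppo table of \autoref{fig:ippo-oppo}. For example, once an \nlR on a queue pair is deferred, any later \sqp-related NIC event must also be deferred, and a CPU event can never follow a deferred event that it would be \oppo-after. In addition, for each location we record whether a write to it occurs in $E_2$, so that reads in $E_1$ are not required to see it.

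\textbf{Step 2: witness the SC cycle.} To certify that $\sc$ is cyclic on the whole trace while each part is acyclic, the monitor nondeterministically guesses a witness. Following Bouajjani et al.\ for TSO, the key candidate is an $E_2$ event that must be \sc-before some $E_1$ event: an $E_1$ read of $x$ reads an older value even though an $E_2$ write to $x$ is \mo-earlier, or an $E_1$ write is \po-after an $E_2$ event. Tracking which locations and threads participate requires only finitely many bits.

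\textbf{Step 3: counters for poll.} The polls-from relation \pf pairs the $k$-th poll with the $k$-th NIC write on a queue pair. Whether a poll in $E_1$ is legal depends on how many NIC writes on that queue pair have already been generated in $E_1$ versus deferred. We therefore keep, per queue pair, a counter of pending NIC writes that are not yet polled, and possibly a second counter for those deferred to $E_2$. Polling decrements the counter, and a \pf-edge from a deferred \nlW to an $E_1$ poll is forbidden by the direction constraint. Without $\poll$, no counting is needed, so $\rdmaprog'$ is finite-state and polynomial in size, which gives \pspace. With $\poll$, $\rdmaprog'$ is a VASS of polynomial size, which gives \expspace.

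\textbf{Step 4: correctness in both directions.} Soundness: every accepting run of $\rdmaprog'$ yields a trace $T(\tau)$ with $\tau = \tau_1\,\tau_2$ that is RDMA-consistent (all \ob-edges follow $\tau$) and has an \sc-cycle. Completeness: a normal-form trace linearised as $\tau''$ is accepted.

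The main obstacle is Step 1. The order in which events are generated must be an SC order for each part, while RDMA allows NIC sub-streams of a thread to lag arbitrarily behind its CPU stream. The difficulty is arguing that only a bounded amount of information about deferred events must be kept, beyond the counts handled in Step 3. I would first try to prove that deferred NIC events can be regenerated in phase 2 by re-running the thread's control from a saved control state. The values read by deferred \nlR events then come from $\mathit{mem}_2$, as in the construction of $\tau''$.
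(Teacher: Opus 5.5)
Your overall architecture matches the paper's: an instrumented SC program tags each event with its part, keeps finite per-queue-pair bookkeeping of which event kinds already went to $E_2$, and uses per-queue-pair poll counters, followed by Rackoff or finite-state reachability. But Step~1 as you set it up does not yield a finite-state or VASS program, and the fix you suggest does not work.

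\textbf{The main gap: deferred events.} You run the parts one after the other, with phase~2 ``continuing from the final $\mathit{mem}_1$''. Within a thread, however, $E_1$- and $E_2$-events interleave in program order. Nothing in the normal form bounds how many NIC operations tagged $E_2$ a thread issues between two $E_1$ CPU events. For example, in a loop $\bar y := x;\ x := 1$, putting every \nlR/\nrW event in $E_2$ and every local write in $E_1$ is a legitimate placement, since no \oppo- or \ippo-edge leads from an \nlR to a CPU event. A sequential simulation must remember all of these events until phase~2.

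Re-running the thread from a saved control state does not help. The re-run must retrace exactly the phase-1 control path. That path is fixed by unboundedly many nondeterministic choices and by the values the $E_1$ CPU reads obtained from $\mathit{mem}_1$, and these can be neither stored nor re-verified.

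The missing idea is the paper's: never defer anything.
\begin{enumerate}
\item Guess the entire memory contents at the $\tau_1/\tau_2$ boundary up front, initialise the $\tau_2$-copy of the memory with the guess, and keep a third copy of it.
\item Run both simulations simultaneously in one SC run. Apply each event, as soon as it is generated in program order, to the $\tau_1$- or $\tau_2$-copy according to its tag.
\item At the end, check that the final $\tau_1$-copy equals the guess.
\end{enumerate}
Then $\tau_1$ and $\tau_2$ are simply the subsequences of generated events with each tag, each automatically in SC order, and no event ever has to be stored.

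\textbf{The \ob-bookkeeping is incomplete.} Your checks cover only \oppo and $([\nlW];\pf)$. But \ob also contains $([\Inst];\ib)$, whose \ib-paths through \ippo, \pf, \nfo and \rf can run backwards in $\tau$ even when every \oppo- and \pf-edge is forward. Two examples:
\begin{itemize}
\item An \nlR is placed in $\tau_1$ while a \po-earlier \nlW of its queue pair is in $\tau_2$ and a fence \po-between them is in $\tau_1$. The \nlR then reaches that earlier fence via an \nfo-edge to the \nlW followed by an \ippo-edge.
\item A poll in $\tau_1$ polls from an \nrW in $\tau_2$ that is preceded there by an \nlR or \nF.
\end{itemize}
The paper needs a case analysis over shortest backward \ib-paths to show that finitely many extra flags suffice. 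These flags either forbid such placements or force the poll into $\tau_2$.

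\textbf{The SC-cycle witness in Step~2 is not right.} In $T(\tau_1\tau_2)$ the relations \rf, \mo and \rb are all forward, so your first situation cannot arise. A \po-edge from $E_2$ into $E_1$ is necessary but not sufficient for an \sc-cycle. The paper instead does the following:
\begin{itemize}
\item It guesses the cyclic order of the participating threads and, per thread, an entry and an exit event. Cycles visiting each thread at most once suffice, by contracting repeated visits to one \po-edge.
\item It records the kind, location, value and part of these events.
\item It checks that each exit and the next thread's entry are conflicting accesses with the exit before the entry in $\tau$. Such a pair is always \sc-related in that direction through \rf, \mo and \rb.
\end{itemize}
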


The key to establishing the upper bounds is to construct an instrumented program running under SC that systematically searches for violating traces in normal form.
The instrumented program builds up the two partitions $E_1$ and $E_2$ in parallel, non-deterministically assigning each generated event to either partition.
To implement this effectively, we work with linearisations $\tau_1$ and $\tau_2$ (explicit total orderings of $E_1$ and $E_2$) rather than the partitions directly.
This choice is motivated by the need to verify RDMA-consistency: maintaining explicit orderings allows us to check that all \ob-edges are forward edges, which is simpler than reasoning about acyclicity in a partial order.

In the following, $\tau = \tau_1 \tau_2$ denotes the violating trace in normal form that we are constructing.
Events are appended to the end of either $\tau_1$ or $\tau_2$, preserving SC-consistency within each sequence.
The instrumented program ensures that the concatenation $\tau = \tau_1 \tau_2$ forms an RDMA-consistent trace with an SC-cycle, thereby witnessing non-robustness of the original RDMA program.

\subsection{Key Idea}

The key idea is to simulate the program $\rdmaprog$ and, for each generated event, non-deterministically decide whether it belongs to $\tau_1$ or $\tau_2$.
The instrumented program runs two simulations in parallel, building up $\tau_1$ and $\tau_2$ in the process:
The first simulation tracks the effects of the events of $\tau_1$ on the memory, while the second simulation does the same for the events of $\tau_2$ on a separate copy of the memory.

Crucially, generated events are always appended to the end of either $\tau_1$ or $\tau_2$.
This preserves SC-consistency \emph{within} each $\tau_i$, but not necessarily \emph{between} $\tau_1$ and $\tau_2$.
The instrumented program must keep track of and ensure RDMA-consistency of the resulting trace $\tau = \tau_1 \tau_2$.
Since we are trying to guess $\tau$ as an \ob-linearisation, all \ob-edges must be forward edges.
If all edges are both well-formed and forward edges, then the resulting trace is RDMA-consistent.

\paragraph{Memory consistency.}
The memory after simulating $\tau_1$ must match the memory at the start of the simulation of $\tau_2$.
To ensure this, we non-deterministically guess the memory state after $\tau_1$ at the start of the program and use this for the simulation of $\tau_2$.
After the instrumented program has simulated all of $\tau$, it checks that the memory of the $\tau_1$-simulation matches what the $\tau_2$-simulation started with.

\paragraph{SC-cycle detection.}
The algorithm must also check whether there exists an SC-cycle.
An SC-cycle consists of a cycle of events where each event and its successor are related by \po, \rf, \rb, or \mo.
At a high level, we guess which threads participate in the cycle and in which order they are visited.
During simulation, each thread may mark one entry and one exit event as its local contribution to the cycle and store the corresponding metadata (type, variable, value, and whether the event is in $\tau_1$ or $\tau_2$).
After simulation stops, the main thread checks whether these local guesses compose into a valid global SC-cycle.
The precise justification and checks are given in \autoref{sec:sc_cycles}.

\subsection{Algorithm Description}

The instrumented program consists of one main thread that orchestrates the simulation plus one simulation thread for each thread in the original RDMA program.
Additionally to some bookkeeping and status variables, the instrumented program stores the following:
\begin{itemize}
    \item Three copies of each variable of the RDMA program: one copy to simulate $\tau_1$ and $\tau_2$ respectively, and the third to check that the value after $\tau_1$ matches the memory at the beginning of $\tau_2$.
    \item For each thread, information about the entry and exit point of the SC cycle, together with the threads's predecessor and successor in the SC cycle.
    \item For each queue pair, information about which type of events have already been added to $\tau_2$.
    \item For each queue pair, two counters to track the number of polls that can be executed in $\tau_1$ and $\tau_2$, respectively.
\end{itemize}
The instrumented program works as follows:
\begin{enumerate}
    \item
    The main thread guesses the value of each variable after executing $\tau_1$.
    It keeps each value in two copies of the variable:
    one for the $\tau_2$-simulation to work on, and one to later verify that the value matches the result after $\tau_1$.
    \item
    The main thread signals the other threads to start simulating.
    \item
    Concurrently, each simulation thread of the instrumented program simulates a thread of the RDMA program as follows.
    For each simulated event:
    \begin{enumerate}
        \item Non-deterministically decide whether the event should be in $\tau_1$ or $\tau_2$.
        \item Check that the event can be added to $\tau_1$ or $\tau_2$, respectively, without violating RDMA-consistency (see \autoref{sec:rdma_consistency}).\label{item:rdma_consistency_check}
        \item Execute the effect of the event on the copy of the memory corresponding to $\tau_1$ or $\tau_2$, respectively.
        \item
        Non-deterministically decide whether the event should be the entry point of the SC-cycle in this thread.
        If so, save the event and check that it can form a valid connection (via \rf, \mo, or \rb) with the corresponding exit point of its predecessor thread in the SC-cycle (see \autoref{sec:sc_cycles}).
        \item Do the same for the exit point.
    \end{enumerate}
    \item
    Non-deterministically, the main thread decides to signal the other threads to stop the simulation.
    It waits for each thread to finish their current event.
    \item
    The main thread checks that the memory after $\tau_1$ matches the memory at the beginning of $\tau_2$.
    \item
    The main thread checks that there is a valid SC-cycle.
    If so, it enters the \emph{target state}.
\end{enumerate}
The original RDMA program is robust if and only if the instrumented SC program \emph{cannot} reach the target state.

\subsection{Ensuring RDMA-Consistency}
\label{sec:rdma_consistency}

To ensure that an event can be added to the trace in an RDMA-consistent way (step~\ref{item:rdma_consistency_check}), the program must check that each \ob-edge is a forward edge.
Additionally, it must verify that \rf is well-formed, i.e., that the value of a read event matches the value of its write event.

The most challenging cases are \oppo and $([\Inst];\ib)$, which can potentially form backward edges.
For \oppo, we track per queue pair which event types have been added to $\tau_2$, preventing events from being added to $\tau_1$ if this would create a backward edge.
For $([\Inst];\ib)$, we perform additional checks based on case analysis to ensure all paths remain forward-directed.
The poll-from relation $([\nlW];\pf)$ requires counters to track unpolled NIC writes, with additional bookkeeping to force certain polls into $\tau_2$.
The remaining relations (\rf, \nfo, \rb, \mo) are always forward edges and require only basic well-formedness checks.

Details of all consistency checks can be found in Appendix~\ref{sec:rdma_consistency_details}.

\subsection{Detecting SC-Cycles}
\label{sec:sc_cycles}

To detect SC-cycles, we non-deterministically guess the entry and exit points of the SC-cycle in each thread.
The entry and exit points are the \po-first and \po-last events of the thread on the cycle, respectively.

A key observation is that it is sufficient to consider cycles that visit each thread at most once.
This is independent of the normal-form assumption (and of RDMA robustness): it holds for any cycle over \po, \rf, \rb, and \mo edges.
If a cycle visits some thread more than once, let $e_1$ be the \po-latest cycle event on that thread, and let $e_2$ be the next cycle event on the same thread encountered when following the cycle after $e_1$.
Then the segment from $e_2$ to $e_1$ can be contracted to the single edge $e_2 \to[\po] e_1$, yielding a shorter cycle.
Repeating this contraction removes repeated visits, so an equivalent cycle exists that visits each thread at most once.

Therefore, it is enough to check connectivity between consecutive threads in the guessed cycle order by verifying that their selected exit/entry events are related by \rf, \mo, or \rb.

The instrumented program ensures proper ordering between exit and entry points across partition boundaries ($\tau_1$ and $\tau_2$) through appropriate tracking during simulation.
After all threads finish, the main thread verifies that all pairs of entry and exit points form a valid cycle.

Details can be found in Appendix~\ref{sec:sc_cycle_detection_details}.

\subsection{Complexity Analysis}

The instrumented program described above establishes the upper bounds of robustness as in \autoref{thm:complexity_robustness}.
It uses a finite amount of state to track:
\begin{itemize}
    \item Three copies of the memory (for $\tau_1$ and $\tau_2$ simulations and comparison),
    \item Per-queue-pair tracking of event types added to $\tau_2$ (for \oppo checking),
    \item Information about entry and exit points of the SC-cycle,
    \item Additional bookkeeping for \ib-consistency checks.
\end{itemize}
All of this requires space polynomial in the size of the original RDMA program.

For programs with poll operations, the instrumented program additionally uses two counters per queue pair to track unpolled NIC writes.
This makes the instrumented program a finite-state program augmented with net counters, which is essentially a vector addition system with states.
Since the reachability problem for VASS is decidable in \expspace~\cite{Rackoff78}, we obtain \expspace-membership.

For programs without poll operations, no counters are needed, yielding a purely finite-state instrumented program.
Reachability for finite-state programs is in \pspace~\cite{SistlaC85}, thus robustness for poll-free programs is in \pspace.

To validate this construction, we have implemented a tool that translates RDMA programs to instrumented Promela programs, which can then be verified using the SPIN model checker.
SPIN is not intended for unbounded VASS-style counter reachability in full generality; however, it is sufficient for our proof-of-concept experiments on small acyclic litmus tests, where the number of polls is bounded and the translation therefore requires only bounded counters.
We evaluated our tool on all sequential and concurrent litmus tests created by Ambal et al~\cite{AmbalDEK0R24}, all of which produced the expected results.
\section{Robustness is EXPSPACE-Complete}
\label{sec:robust-hardness}

The previous section established that RDMA-robustness is decidable in \expspace for the general case and in \pspace for programs without poll operations.
We now show that these upper bounds are tight by establishing matching lower bounds.
This demonstrates that the algorithm from the previous section is optimal.

We establish \expspace-hardness by reduction from the state-reachability problem for Vector Addition Systems with States (VASS)~\cite{HopcroftP79}.
A VASS consists of a finite-state control with a finite number of counters ranging over the natural numbers $\Nat$.
Each transition either increments or decrements a counter.
Increment transitions are always enabled, while decrement transitions can only fire when the counter value is strictly positive.
The state-reachability problem asks whether, starting from a given initial state with all counters set to zero, there exists a sequence of transitions leading to a designated accepting state (with arbitrary counter values).
This problem is \expspace-complete~\cite{Lipton76,Rackoff78}.

The key insight is that RDMA programs can naturally simulate VASS executions: control states are encoded in program variables, while counter values correspond to the number of unpolled remote writes on queue pairs.
We construct an RDMA program that simulates a VASS and then extend it to witness non-robustness precisely when the VASS reaches its accepting state.

\subsection{Reduction from VASS}

The reduction simulates a VASS execution using an RDMA program $\rdmaprog$.
Given a VASS $V=(Q,d,T,q_0)$ with control states $Q$, $d$ counters, transitions $T$, and initial state $q_0$, we construct an RDMA program as follows.
For each counter $c \in \{1,\dots,d\}$, we introduce an auxiliary node $n_c$ containing a single variable $x_c$.
A distinguished leader node $n_L$ simulates the control state of the VASS using a local variable $x$.

The reduction maintains the following invariants:
The leader node stores the current VASS control state in $x$.
For each counter $c$, its value is represented by the number of remote writes to node $n_c$ that have not yet been matched by a poll operation.
Thus, issuing a remote write increments the counter, while a poll operation decrements it.
Since each poll operation matches exactly one pending write, the number of pending acknowledgements corresponds precisely to the counter value.
\begin{wrapfigure}[11]{r}{0.6\textwidth}
    \centering
    \begin{subfigure}{0.25\textwidth}
        \centering
        \begin{tabular}{l l}
            \texttt{1:} & $\assume(x = q)$ \\
            \texttt{2:} & $\overline x_{c} := 1$ \\
            \texttt{3:} & $x := q'$
        \end{tabular}
    \end{subfigure}
    \hfill
    \begin{subfigure}{0.25\textwidth}
        \centering
        \begin{tabular}{l l}
            \texttt{1:} & $\assume(x = q)$ \\
            \texttt{2:} & $\poll (n_c)$ \\
            \texttt{3:} & $x := q'$
        \end{tabular}
    \end{subfigure}
    \caption{The RDMA code fragment corresponding to VASS transitions incrementing (left) and decrementing (right) a counter $c$.}
    \label{fig:vass-rdma}
\end{wrapfigure}
More precisely, for each transition $t=(q,c,j,q') \in T$, the leader uses the code fragment given in \autoref{fig:vass-rdma} to simulate the transition $t$.
It first checks that the value of $x$ is $q$, then performs the RDMA operations corresponding to the value of $j$: an increment of counter $c$ (i.e., $j=1$) is simulated by issuing a remote write to $x_c$, while a decrement (i.e., $j=-1$) is simulated by a poll operation on $n_c$.

Suppose the VASS has $k$ transitions $\{t_1, t_2, \dots, t_k\}$.
The leader repeatedly executes a non-deterministic choice among all such code fragments (one per transition), thereby simulating arbitrary VASS transition sequences.
Formally, let $\C_i$ denote the code fragment corresponding to transition $i$.
Let $\C_a$ be a code fragment that checks whether $x$ equals the accepting state $q_a$.
The leader node executes the following loop: $(\C_1 + \C_2 + \dots + \C_k + \C_a)^*$.
It follows that the accepting state $q_a$ is reachable in the VASS if and only if $\C_a$ is reachable in $\rdmaprog$.
The construction is polynomial in the size of the VASS.
Note that $\rdmaprog$ is robust: all operations execute in program order from the leader's perspective, and even though NIC writes may be delayed past subsequent polls, this is not observable since there are no competing reads or writes to the auxiliary nodes.

\subsection{Extension to Robustness}

To reduce to the robustness problem, we modify the RDMA program $\rdmaprog$ into a new program $\rdmaprog'$ such that $\rdmaprog'$ is non-robust if and only if the state $q_a$ is reachable in the VASS.
We append some code fragment known to be non-robust (for example, the code from \autoref{sec:example}) to the end of $\C_a$ to obtain the new code $\C'_a$.
If needed, we also add additional nodes and variables to the RDMA program, as required by the non-robust code fragment.
Let $\rdmaprog'$ be the program $(\C_1 + \C_2 + \dots + \C_k + \C'_a)^*$.

If $q_a$ is reachable in the VASS (i.e., $\C_a$ is reachable in $\rdmaprog$), then $\C'_a$ is reachable in $\rdmaprog'$, causing a robustness violation.
Conversely, if $q_a$ is not reachable in the VASS, then the code fragment $\C'_a$ can never be executed, and hence there is no robustness violation.
This establishes \expspace-hardness for the general case.

For the restricted case of RDMA programs without poll operations, we use a simpler construction.
Since no counters are needed, we reduce from the reachability problem for ordinary finite-state programs instead of VASS, which is \pspace-hard~\cite{CHENG1995117}.

\begin{theorem}[Complexity of RDMA-Robustness]
\label{thm:robustness-completeness}
    \begin{enumerate}
        \item RDMA-robustness is \expspace-complete in the general case.
        \item RDMA-robustness for programs without poll operations is \pspace-complete.
    \end{enumerate}
\end{theorem}

This establishes that the algorithm presented in the previous section is optimal: the \expspace upper bound cannot be improved without advances in the complexity of VASS reachability, and similarly for the \pspace bound in the poll-free case.
The robustness problem is thus computationally equivalent to reachability in finite-state programs with counters (general case) or without counters (poll-free case).

\section{Conclusion}\label{sec:conclusion}

In this paper, we studied verification problems for RDMA programs and showed that reachability is undecidable in general, even for restricted fragments.
We then proved that robustness for RDMA programs is decidable and \expspace-complete, based on a normal-form characterisation of robustness violations.
This result yields a decision procedure that reduces robustness checking to reachability under sequential consistency, with a \pspace bound for programs without poll operations.
In contrast to prior RDMA robustness work that provided restrictive sufficient criteria, our results give full decidability and complexity characterisations for robustness and for reachability-based safety verification under RDMA semantics.

We hope that our work will serve as a foundation for practical tools for automated robustness checking of RDMA programs.
Our own implementation is a proof of concept, and scaling it to real-world benchmarks remains a challenge.
We also plan to extend the techniques to richer RDMA features and to other baseline memory models.

\begin{credits}
\subsubsection{\discintname}
The authors have no competing interests to declare.
\end{credits}

\bibliography{rdma}

@article{CHENG1995117,
  author       = {Allan Cheng and
                  Javier Esparza and
                  Jens Palsberg},
  title        = {Complexity Results for 1-Safe Nets},
  journal      = {Theor. Comput. Sci.},
  volume       = {147},
  number       = {1{\&}2},
  pages        = {117--136},
  year         = {1995},
  doi          = {https://doi.org/10.1016/0304-3975(94)00231-7},
}

@inproceedings{DBLP:conf/icalp/BouajjaniMM11,
  author       = {Ahmed Bouajjani and
                  Roland Meyer and
                  Eike M{\"{o}}hlmann},
  editor       = {Luca Aceto and
                  Monika Henzinger and
                  Jir{\'{\i}} Sgall},
  title        = {Deciding Robustness against Total Store Ordering},
  booktitle    = {Automata, Languages and Programming - 38th International Colloquium,
                  {ICALP} 2011, Zurich, Switzerland, July 4-8, 2011, Proceedings, Part
                  {II}},
  series       = {Lecture Notes in Computer Science},
  volume       = {6756},
  pages        = {428--440},
  publisher    = {Springer},
  year         = {2011},
  doi          = {https://doi.org/10.1007/978-3-642-22012-8_34},
}

@article{AmbalDEK0R24,
  author       = {Guillaume Ambal and
                  Brijesh Dongol and
                  Haggai Eran and
                  Vasileios Klimis and
                  Ori Lahav and
                  Azalea Raad},
  title        = {Semantics of Remote Direct Memory Access: Operational and Declarative
                  Models of {RDMA} on {TSO} Architectures},
  journal      = {Proc. {ACM} Program. Lang.},
  volume       = {8},
  number       = {{OOPSLA2}},
  pages        = {1982--2009},
  year         = {2024},
  doi          = {https://doi.org/10.1145/3689781},
}

@inproceedings{DBLP:conf/esop/AmbalLR25,
  author       = {Guillaume Ambal and
                  Ori Lahav and
                  Azalea Raad},
  title        = {Sufficient Conditions for Robustness of {RDMA} Programs},
  booktitle    = {{ESOP} {(1)}},
  series       = {Lecture Notes in Computer Science},
  volume       = {15694},
  pages        = {56--87},
  publisher    = {Springer},
  year         = {2025},
  doi          = {https://doi.org/10.1007/978-3-031-91118-7_3},
}

@inproceedings{DanLHV16,
  author       = {Andrei Marian Dan and
                  Patrick Lam and
                  Torsten Hoefler and
                  Martin T. Vechev},
  title        = {Modeling and analysis of remote memory access programming},
  booktitle    = {{OOPSLA}},
  pages        = {129--144},
  publisher    = {{ACM}},
  year         = {2016},
  doi          = {https://doi.org/10.1145/2983990.2984033},
}

@article{Post1946,
  author       = {Emil L. Post},
  title        = {A Variant of a Recursively Unsolvable Problem},
  journal      = {Bulletin of the American Mathematical Society},
  volume       = {52},
  number       = {4},
  pages        = {264--268},
  year         = {1946},
  doi          = {https://doi.org/10.1090/S0002-9904-1946-08555-9},
}

@article{HopcroftP79,
  author       = {John E. Hopcroft and
                  Jean{-}Jacques Pansiot},
  title        = {On the Reachability Problem for 5-Dimensional Vector Addition Systems},
  journal      = {Theor. Comput. Sci.},
  volume       = {8},
  pages        = {135--159},
  year         = {1979},
  doi          = {https://doi.org/10.1016/0304-3975(79)90041-0},
}

@book{Lipton76,
  title        = {The reachability problem requires exponential space},
  author       = {Richard J. Lipton},
  series       = {Research report (Yale University. Department of Computer Science)},
  url          = {https://books.google.de/books?id=7iSbGwAACAAJ},
  year         = {1976},
  publisher    = {Department of Computer Science, Yale University}
}

@article{Rackoff78,
  author       = {Charles Rackoff},
  title        = {The Covering and Boundedness Problems for Vector Addition Systems},
  journal      = {Theor. Comput. Sci.},
  volume       = {6},
  pages        = {223--231},
  year         = {1978},
  doi          = {https://doi.org/10.1016/0304-3975(78)90036-1},
}

@article{SistlaC85,
  author = {Sistla, A. P. and Clarke, E. M.},
  title = {The complexity of propositional linear temporal logics},
  year = {1985},
  issue_date = {July 1985},
  publisher = {Association for Computing Machinery},
  address = {New York, NY, USA},
  volume = {32},
  number = {3},
  issn = {0004-5411},
  doi = {https://doi.org/10.1145/3828.3837},
  journal = {J. ACM},
  month = jul,
  pages = {733–749},
  numpages = {17}
}

@article{AbdullaABKS24,
  author       = {Parosh Aziz Abdulla and
                  Mohamed Faouzi Atig and
                  Ahmed Bouajjani and
                  K. Narayan Kumar and
                  Prakash Saivasan},
  title        = {Verification under Intel-x86 with Persistency},
  journal      = {Proc. {ACM} Program. Lang.},
  volume       = {8},
  number       = {{PLDI}},
  pages        = {1189--1212},
  year         = {2024},
  doi          = {https://doi.org/10.1145/3656425},
}

@article{AbdullaABKS21,
  author       = {Parosh Aziz Abdulla and
                  Mohamed Faouzi Atig and
                  Ahmed Bouajjani and
                  K. Narayan Kumar and
                  Prakash Saivasan},
  title        = {Deciding reachability under persistent x86-TSO},
  journal      = {Proc. {ACM} Program. Lang.},
  volume       = {5},
  number       = {{POPL}},
  pages        = {1--32},
  year         = {2021},
  doi          = {https://doi.org/10.1145/3434337},
}

@inproceedings{AbdullaAGKV21,
  author       = {Parosh Aziz Abdulla and
                  Mohamed Faouzi Atig and
                  Adwait Godbole and
                  S. Krishna and
                  Viktor Vafeiadis},
  title        = {The Decidability of Verification under {PS} 2.0},
  booktitle    = {{ESOP}},
  series       = {Lecture Notes in Computer Science},
  volume       = {12648},
  pages        = {1--29},
  publisher    = {Springer},
  year         = {2021},
  doi          = {https://doi.org/10.26226/morressier.604907f41a80aac83ca25d26},
}

@inproceedings{AbdullaABDLM20,
  author       = {Parosh Aziz Abdulla and
                  Mohamed Faouzi Atig and
                  Ahmed Bouajjani and
                  Egor Derevenetc and
                  Carl Leonardsson and
                  Roland Meyer},
  title        = {On the State Reachability Problem for Concurrent Programs Under Power},
  booktitle    = {{NETYS}},
  series       = {Lecture Notes in Computer Science},
  volume       = {12129},
  pages        = {47--59},
  publisher    = {Springer},
  year         = {2020},
  doi          = {https://doi.org/10.1007/978-3-030-67087-0_4},
}

@inproceedings{AbdullaABN16,
  author       = {Parosh Aziz Abdulla and
                  Mohamed Faouzi Atig and
                  Ahmed Bouajjani and
                  Tuan Phong Ngo},
  title        = {The Benefits of Duality in Verifying Concurrent Programs under {TSO}},
  booktitle    = {{CONCUR}},
  series       = {LIPIcs},
  volume       = {59},
  pages        = {5:1--5:15},
  publisher    = {Schloss Dagstuhl - Leibniz-Zentrum f{\"{u}}r Informatik},
  year         = {2016},
  doi          = {https://doi.org/10.4230/LIPIcs.CONCUR.2016.5},
}

@inproceedings{AbdullaAP15,
  author       = {Parosh Aziz Abdulla and
                  Mohamed Faouzi Atig and
                  Ngo Tuan Phong},
  title        = {The Best of Both Worlds: Trading Efficiency and Optimality in Fence
                  Insertion for {TSO}},
  booktitle    = {{ESOP}},
  series       = {Lecture Notes in Computer Science},
  volume       = {9032},
  pages        = {308--332},
  publisher    = {Springer},
  year         = {2015},
  doi          = {https://doi.org/10.1007/978-3-662-46669-8_13},
}

@inproceedings{SinghL24,
  author       = {Abhishek Kr Singh and
                  Ori Lahav},
  title        = {Decidable Verification under Localized Release-Acquire Concurrency},
  booktitle    = {{TACAS} {(3)}},
  series       = {Lecture Notes in Computer Science},
  volume       = {14572},
  pages        = {235--254},
  publisher    = {Springer},
  year         = {2024},
  doi          = {https://doi.org/10.1007/978-3-031-57256-2_12},
}

@article{Toplas-LahavB22,
  author       = {Ori Lahav and
                  Udi Boker},
  title        = {What's Decidable About Causally Consistent Shared Memory?},
  journal      = {{ACM} Trans. Program. Lang. Syst.},
  volume       = {44},
  number       = {2},
  pages        = {8:1--8:55},
  year         = {2022},
  doi          = {https://doi.org/10.1145/3505273},
}

@inproceedings{AtigBBM12,
  author       = {Mohamed Faouzi Atig and
                  Ahmed Bouajjani and
                  Sebastian Burckhardt and
                  Madanlal Musuvathi},
  title        = {What's Decidable about Weak Memory Models?},
  booktitle    = {{ESOP}},
  series       = {Lecture Notes in Computer Science},
  volume       = {7211},
  pages        = {26--46},
  publisher    = {Springer},
  year         = {2012},
  doi          = {https://doi.org/10.1007/978-3-642-28869-2_2},
}

@inproceedings{AtigBBM10,
  author       = {Mohamed Faouzi Atig and
                  Ahmed Bouajjani and
                  Sebastian Burckhardt and
                  Madanlal Musuvathi},
  title        = {On the verification problem for weak memory models},
  booktitle    = {{POPL}},
  pages        = {7--18},
  publisher    = {{ACM}},
  year         = {2010},
  doi          = {https://doi.org/10.1145/1706299.1706303},
}

@inproceedings{AbdullaAKR15,
  author       = {Parosh Aziz Abdulla and
                  Mohamed Faouzi Atig and
                  Ahmet Kara and
                  Othmane Rezine},
  title        = {Verification of Buffered Dynamic Register Automata},
  booktitle    = {{NETYS}},
  series       = {Lecture Notes in Computer Science},
  volume       = {9466},
  pages        = {15--31},
  publisher    = {Springer},
  year         = {2015},
  doi          = {https://doi.org/10.1007/978-3-319-26850-7_2},
}

@inproceedings{AbdullaAAK19,
  author       = {Parosh Aziz Abdulla and
                  Jatin Arora and
                  Mohamed Faouzi Atig and
                  Shankara Narayanan Krishna},
  title        = {Verification of programs under the release-acquire semantics},
  booktitle    = {{PLDI}},
  pages        = {1117--1132},
  publisher    = {{ACM}},
  year         = {2019},
  doi          = {https://doi.org/10.1145/3314221.3314649},
}

@inproceedings{BouajjaniDM13,
  author       = {Ahmed Bouajjani and
                  Egor Derevenetc and
                  Roland Meyer},
  title        = {Checking and Enforcing Robustness against {TSO}},
  booktitle    = {{ESOP}},
  series       = {Lecture Notes in Computer Science},
  volume       = {7792},
  pages        = {533--553},
  publisher    = {Springer},
  year         = {2013},
  doi          = {https://doi.org/10.1007/978-3-642-37036-6_29},
}

@inproceedings{CalinDMM13,
  author       = {Georgel Calin and
                  Egor Derevenetc and
                  Rupak Majumdar and
                  Roland Meyer},
  title        = {A Theory of Partitioned Global Address Spaces},
  booktitle    = {{FSTTCS}},
  series       = {LIPIcs},
  volume       = {24},
  pages        = {127--139},
  publisher    = {Schloss Dagstuhl - Leibniz-Zentrum f{\"{u}}r Informatik},
  year         = {2013},
  doi          = {https://doi.org/10.4230/LIPIcs.FSTTCS.2013.127},
}

@inproceedings{DerevenetcM14,
  author       = {Egor Derevenetc and
                  Roland Meyer},
  title        = {Robustness against Power is PSpace-complete},
  booktitle    = {{ICALP} {(2)}},
  series       = {Lecture Notes in Computer Science},
  volume       = {8573},
  pages        = {158--170},
  publisher    = {Springer},
  year         = {2014},
  doi          = {https://doi.org/10.1007/978-3-662-43951-7_14},
}

@inproceedings{BouajjaniDM14,
  author       = {Ahmed Bouajjani and
                  Egor Derevenetc and
                  Roland Meyer},
  title        = {Robustness against Relaxed Memory Models},
  booktitle    = {Software Engineering},
  series       = {{LNI}},
  volume       = {{P-227}},
  pages        = {85--86},
  publisher    = {{GI}},
  year         = {2014},
  url          = {https://dl.gi.de/handle/20.500.12116/30973},
}

@phdthesis{DBLP:phd/dnb/Derevenetc15,
  author       = {Egor Derevenetc},
  title        = {Robustness against Relaxed Memory Models},
  school       = {University of Kaiserslautern},
  year         = {2015},
  url          = {https://nbn-resolving.org/urn:nbn:de:hbz:386-kluedo-40743},
}

\newpage
\appendix
\section{Program Code for the Undecidability Proof}
\label{sec:code}
\setlength{\tabcolsep}{4pt}

This section holds the full program code for each of the processes in the reduction from PCP to RDMA reachability.
We use the expression $\A^+$ to denote $\A;\A^*$, that is, $\A$ is executed for a finite, \emph{non-zero} number of iterations.
Furthermore, $\bar x := c$ for some remote variable $\bar x$ and constant $c$ assumes that the value $c$ is available in some (readonly) variable.

\begin{figure}[H]
    \centering
    \begin{tikzpicture}[
        every text node part/.style={align=center},
        xscale=4.5,
        yscale=-1.5,
        proc/.style={draw, rectangle split, rectangle split parts=3, rounded corners, minimum width=2cm}
    ]
        \node[proc, rectangle split parts=2] (p1) at ( 0, 0) {\text{\nodeset 1} \nodepart{two} $\process_1$};
        \node[proc] (p2) at (-1, 1) {
            \text{\nodeset 2} \nodepart{two}
            $\begin{aligned}x_1 &= 0 \\ f_1 &= \bot\end{aligned}$
            \nodepart{three} $\process_2$
        };
        \node[proc] (p3) at ( 1, 1) {
            \text{\nodeset 3} \nodepart{two}
            $\begin{aligned}x_2 &= 0 \\ f_2 &= \bot\end{aligned}$
            \nodepart{three} $\process_3$
        };
        \node[proc, minimum width=3cm] (p4) at ( 0, 2) {
            \text{\nodeset 4} \nodepart{two}
            $\begin{aligned}y_1 &= 0 & y_2 &= 0 \\ g_1 &= \bot & g_2 &= \bot\end{aligned}$
            \nodepart{three} $\process_4$
        };
        
        \draw[->] (p1.south west) -- node[pos=0.3, above left]  {index $i$}        (p2.east);
        \draw[->] (p1.south east) -- node[pos=0.3, above right] {index $i$}        (p3.west);
        \draw[->] (p2.south east) -- node[pos=0.7, below left]  {letters of $u_i$} (p4.west);
        \draw[->] (p3.south west) -- node[pos=0.7, below right] {letters of $v_i$} (p4.east);
    \end{tikzpicture}
    \caption{Nodes, variables, processes and data flow in the RDMA program.}
    \label{fig:pcp-overview}
\end{figure}

\vspace{-0.6cm}

\begin{figure}[H]
    \centering
    \begin{subfigure}{0.45\textwidth}
        \centering
        \begin{tabular}{l l}
            \texttt{1:} & $(\A_1 + \A_2 + \ldots + \A_N)^+$ \\
            \texttt{2:} & $\bar x_1 := \#$ \\
            \texttt{3:} & $\bar x_2 := \#$
        \end{tabular}
        \caption{main routine}
        \label{lst:p1-main}
    \end{subfigure}
    \hfill
    \begin{subfigure}{0.45\textwidth}
        \centering
        \begin{tabular}{l l}
            \texttt{1:} & $\bar x_1 := i$ \\
            \texttt{2:} & $\bar f_1 := \top$ \\
            \texttt{3:} & $\bar x_2 := i$ \\
            \texttt{4:} & $\bar f_2 := \top$
        \end{tabular}
        \caption{subroutine $\A_i$}
        \label{lst:p1-sub}
    \end{subfigure}
    \caption{Process $\process_1$}
    \label{lst:proc-1}
\end{figure}

\vspace{-0.6cm}

\begin{figure}[H]
    \centering
    \begin{subfigure}{0.45\textwidth}
        \centering
        \begin{tabular}{l l}
            \texttt{ 1:} & $($ \\
            \texttt{ 2:} & $\quad \B_1 + \B_2 + \dots + \B_N$ \\
            \texttt{ 3:} & $\quad x_1 := \bot$ \\
            \texttt{ 4:} & $\quad \assume(f_1 = \top)$ \\
            \texttt{ 5:} & $\quad \assume(f_1 = \bot)$ \\
            \texttt{ 6:} & $\quad f_1 := \top$ \\
            \texttt{ 7:} & $\quad \assume(x_1 = \bot)$ \\
            \texttt{ 8:} & $)^+$ \\
            \texttt{ 9:} & $\assume(x_1 = \#)$ \\
            \texttt{10:} & $\assume(f_1 = \bot)$ \\
            \texttt{11:} & $\bar y_1 := \#$
        \end{tabular}
        \caption{main routine}
        \label{lst:p2-main}
    \end{subfigure}
    \hfill
    \begin{subfigure}{0.45\textwidth}
        \centering
        \begin{tabular}{l l}
            \texttt{1:} & $\assume(x_1 = i)$ \\
            \texttt{2:} & $\bar y_1 := u_{i,1}$ \\
            \texttt{3:} & $\bar g_1 := \top$ \\
            \texttt{4:} & $\bar y_1 := u_{i,2}$ \\
            \texttt{5:} & $\bar g_1 := \top$ \\
            \texttt{\vdots} & \qquad\vdots \\
            \texttt{  } & $\bar y_1 := u_{i,|u_i|}$ \\
            \texttt{  } & $\bar g_1 := \top$
        \end{tabular}
        \caption{subroutine $\B_i$}
        \label{lst:p2-sub}
    \end{subfigure}
    \caption{Process $\process_2$}
    \label{lst:proc-2}
\end{figure}

\begin{figure}[p]
    \centering
    \begin{subfigure}{0.45\textwidth}
        \centering
        \begin{tabular}{l l}
            \texttt{ 1:} & $($ \\
            \texttt{ 2:} & $\quad \C_1 + \C_2 + \dots + \C_N$ \\
            \texttt{ 3:} & $\quad x_2 := \bot$ \\
            \texttt{ 4:} & $\quad \assume(f_2 = \top)$ \\
            \texttt{ 5:} & $\quad \assume(f_2 = \bot)$ \\
            \texttt{ 6:} & $\quad f_2 := \top$ \\
            \texttt{ 7:} & $\quad \assume(x_2 = \bot)$ \\
            \texttt{ 8:} & $)^+$ \\
            \texttt{ 9:} & $\assume(x_2 = \#)$ \\
            \texttt{10:} & $\assume(f_2 = \bot)$ \\
            \texttt{11:} & $\bar y_2 := \#$
        \end{tabular}
        \caption{main routine}
        \label{lst:p3-main}
    \end{subfigure}
    \hfill
    \begin{subfigure}{0.45\textwidth}
        \centering
        \begin{tabular}{l l}
            \texttt{1:} & $\assume(x_2 = i)$ \\
            \texttt{2:} & $\bar y_2 := v_{i,1}$ \\
            \texttt{3:} & $\bar g_2 := \top$ \\
            \texttt{4:} & $\bar y_2 := v_{i,2}$ \\
            \texttt{5:} & $\bar g_2 := \top$ \\
            \texttt{\vdots} & \qquad\vdots \\
            \texttt{  } & $\bar y_2 := v_{i,|v_i|}$ \\
            \texttt{  } & $\bar f_2 := \top$
        \end{tabular}
        \caption{subroutine $\C_i$}
        \label{lst:p3-sub}
    \end{subfigure}
    \caption{Process $\process_3$}
    \label{lst:proc-3}
\end{figure}

\begin{figure}
    \centering
    \begin{subfigure}{0.45\textwidth}
        \centering
        \begin{tabular}{l l}
            \texttt{ 1:} & $($ \\
            \texttt{ 2:} & $\quad \D_{\sigma_1} + \D_{\sigma_2} + \dots + \D_{\sigma_{|\Sigma|}}$ \\
            \texttt{ 3:} & $\quad y_1 := \bot$ \\
            \texttt{ 4:} & $\quad \assume(g_1 = \bot)$ \\
            \texttt{ 5:} & $\quad \assume(g_1 = \top)$ \\
            \texttt{ 6:} & $\quad g_1 := \bot$ \\
            \texttt{ 7:} & $\quad \assume(y_1 = \bot)$ \\
            \texttt{ 8:} & $\quad y_2 := \bot$ \\
            \texttt{ 9:} & $\quad \assume(g_2 = \bot)$ \\
            \texttt{10:} & $\quad \assume(g_2 = \top)$ \\
            \texttt{11:} & $\quad g_2 := \bot$ \\
            \texttt{12:} & $\quad \assume(y_2 = \bot)$ \\
            \texttt{13:} & $)^+$ \\
            \texttt{14:} & $\assume(y_1 = \#)$ \\
            \texttt{15:} & $\assume(g_1 = \bot)$ \\
            \texttt{16:} & $\assume(y_2 = \#)$ \\
            \texttt{17:} & $\assume(g_2 = \bot)$
        \end{tabular}
        \caption{main routine}
        \label{lst:p4-main}
    \end{subfigure}
    \hfill
    \begin{subfigure}{0.45\textwidth}
        \centering
        \begin{tabular}{l l}
            \texttt{1:} & $\assume(y_1 = \sigma)$ \\
            \texttt{2:} & $\assume(y_2 = \sigma)$
        \end{tabular}
        \caption{subroutine $\D_\sigma$}
        \label{lst:p4-sub}
    \end{subfigure}
    \caption{Process $\process_4$}
    \label{lst:proc-4}
\end{figure}

\clearpage
\clearpage
\section{Formal Definition of $\pathset(\process)$}
\label{sec:def_seq}

We inductively define $\pathset(\process) \subseteq \labset^*$ as the set of all finite label sequences that can be generated by the thread $\process$:
\begin{align*}
    \pathset(\nop)                      &= \set\epsilon \\
    \pathset(\process_1 ; \process_2)   &= \set{s_1 \cdot s_2 \mid s_1 \in \pathset(\process_1), s_2 \in \pathset(\process_2)} \\
    \pathset(\process_1 + \process_2)   &= \pathset(\process_1) \cup \pathset(\process_2) \\
    \pathset(\process^0)                &= \set\epsilon \\
    \pathset(\process^{k+1})            &= \pathset(\process^k ; \process) \\
    \pathset(\process^*)                &= \bigcup_{k \geq 0} \pathset(\process^k),\\
    \\
    \pathset(x := v_w)                  &= \set{\lW(x, v_w)} \\
    \pathset(\assume(x = v_r))          &= \set{\lR(x, v_r)} \\
    \pathset(\assume(x \neq v_r))       &= \set{\lR(x, v) \mid v \in \valset \setminus \set{v_r}} \\
    \pathset(x := \CAS(z, v_r, v_w))    &= \set{\CAS(z, v_r, v_w) \cdot \lW(x, v_r)} \\
                                        &\qquad\qquad \cup \set{\lR(z, v) \cdot \lW(x, v) \mid v \in \valset \setminus \set{v_r}} \\
    \\
    \pathset(x := \bar y)               &= \set{\nrR(\bar y, v) \cdot \nlW(x, v, n(\bar y)) \mid v \in \valset} \\
    \pathset(\bar y := x)               &= \set{\nlR(x, v, n(\bar y)) \cdot \nrW(\bar y, v) \mid v \in \valset} \\
    \pathset(\rfence(\bar n))           &= \set{\nF(\bar n)} \\
    \pathset(\poll(\bar n))             &= \set{\poll(\bar n)}
\end{align*}

\clearpage
\section{Proof of \autoref{thm:normal_form}}
\label{sec:proofs}

\paragraph{Notation.}
In this section, we use $T$, $T' = T(\tau')$ and $T'' = T(\tau'')$ as defined previously.
Furthermore, for $\rel \in \set{\po, \pf, \rf, \mo, \nfo, \rb, \ippo, \oppo, \ib, \ob, \sc}$, we write $\rel$, $\rel'$ and $\rel''$ for $T.\rel$, $T'.\rel$ and $T''.\rel$, respectively.

Let $\tau = e_1, \dots, e_m$.
We formalise the event set of $T'$ as $E' = \set{ e_1', \dots, e_m' }$, where $e_i'$ is inductively defined as:
\begin{itemize}
    \item If $e_i = \tuple{\iota, t, \nlR(x, v_r, \bar n)}$,
    let $k = \max\set{ j \mid j < i, e_j \in E.\writes, \loc(e_j) = x }$
    and define $e_i' = \tuple{\iota, t, \nlR(x, v_w(e_k'), \bar n)}$.
    \item Similarly, if $e_i = \tuple{\iota, t, \nrR(\bar y, v_r)}$,
    let $k = \max\set{ j \mid j < i, e_j \in E.\writes, \loc(e_j) = y }$
    and define $e_i' = \tuple{\iota, t, \nrR(x, v_w(e_k'), \bar n)}$.
    \item If $e_i = \tuple{\iota, t, \nrW(\bar y, v_w)}$,
    let $e_k$ be the unique event such that $e_k \to[\po] e_i$ and there is no $e_j$ such that $e_k \to[\po] e_j \to[\po] e_i$.
    Due to the construction of \po from the event sequences generated by the thread $\process_t$, it must be the case that $e_k$ is the matching NIC local read to the NIC remote write $e_i$.
    Since $e_k \to[\ib] e_i$, we have $e_k <_\tau e_i$ and thus $k < i$.
    Define $e_i' = \tuple{\iota, t, \nrW(\bar y, v_r(e_k'))}$.
    \item Similarly, if $e_i = \tuple{\iota, t, \nlW(x, v_w, \bar n)}$,
    let $e_k$ be the unique event such that $e_k \to[\po] e_i$ and there is no $e_j$ such that $e_k \to[\po] e_j \to[\po] e_i$.
    Again, $e_k$ must be the matching event of $e_i$.
    We conclude that $e_k$ is a NIC remote read and furthermore that $k < i$.
    Define $e_i' = \tuple{\iota, t, \nlW(x, v_r(e_k'), \bar n)}$.
    \item Otherwise, $e_i' = e_i$.
\end{itemize}
For simplicity, we will sometimes treat $E'$ as a subset of $E$, although this is not strictly the case.
In particular, we associate any event $\bar{e} \in E \setminus \set{e}$ with the (unique) event $\bar{e}' \in E'$ that agrees on both the program counter and the thread id:
$\iota(\bar{e}) = \iota(\bar{e}')$ and $t(\bar{e}) = t(\bar{e}')$.
We will use $a, b, w, r, \dots$ for events in $E$ and $a', b', w', r', \dots$ for their corresponding events in $E'$.
Note that $T$ and $T''$ have \emph{exactly} the same events, that is, $E = E''$ with no caveats.

For completeness, we also formalise the notation $\tau_i\downarrow_\sigma$.
Let $\sigma = e_1, \dots, e_m$, then it is defined as the subsequence of $\sigma$ containing exactly the elements of $\tau_i$:
$\tau_i\downarrow_\sigma = e_{j_1}, \dots, e_{j_{|\tau_i|}}$, where $j_1 < \dots < j_{|\tau_i|}$ and $e_{j_k} \in \tau_i$ for all $k = 1, \dots, |\tau_i|$.

For a relation $\rel \subset E \times E$ and a total order $\tau$ on $E$, we say that $a \to[\rel] b$ is a \emph{forward edge} if $a <_\tau b$, and a \emph{backward edge} if $b <_\tau a$.

\paragraph{Overview.}
The proof proceeds in two phases, corresponding to the two trace transformations from $T$ to $T'$ and from $T'$ to $T''$, respectively.

\autoref{claim:changed_value_is_ib} shows that any event whose value changes must have been issued after $e$, which is crucial for maintaining structural properties.
\autoref{lem:T_prime_trace} proves that $T'$ is a valid trace by verifying the event set, the polls-from relation and the reads-from relation remain well-formed after the transformation.
\autoref{thm:T_prime_consistent} then establishes RDMA-consistency of $T'$ by showing that the observed-before relation $\ob'$ remains acyclic, primarily by demonstrating that $\ob' \subseteq \ob$.

During the transformation from $T$ to $T'$, the relations \po, \ippo, \oppo, \pf, and \nfo are simply restricted to $E'$, i.e., $\rel' = \rel|_{E'}$ for these relations.
However, \rf and \mo may change more substantially: if $e$ was a write event in a read-from or modification-order chain, the corresponding chains in $T'$ may be modified (e.g., $w \to[\mo] e \to[\rf] r$ in $T$ becomes $w \to[\rf'] r$ in $T'$).
Furthermore, the sequential-consistency relation \sc changes as $e$ is removed from the trace.

We proceed with the analysis of $T''$.
The key structural observation is \autoref{lem:rel_equals_rel_prime_prime}, which establishes that several fundamental relations (\po, \ippo, \oppo, \pf, \rf, \mo, \rb) remain unchanged between $T$ and $T''$.
Building on this, \autoref{thm:T_prime_prime_consistent} proves that $T''$ is RDMA-consistent by showing all $\ob''$-edges are forward edges with respect to $\tau''$.
The proof carefully analyses three positional cases for events in $\tau'' = \tau_1\downarrow_\sigma e \tau_2\downarrow_\sigma$ and shows that the problematic backward edge case leads to a contradiction.
Finally, \autoref{cor:T_prime_prime_violating} establishes that $T.\sc = T''.\sc$, completing the proof that $T''$ violates the same SC properties as $T$ while being RDMA-consistent.

\begin{claim}
\label{claim:changed_value_is_ib}
    If $\bar e \in E$ and its corresponding event $\bar e' \in E'$ have different values, then $\bar e$ must have been issued after the removed event $e$, that is, $e \to[\ib] \bar e$.
\end{claim}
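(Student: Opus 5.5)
We argue by strong induction on the position $i$ of $\bar e = e_i$ in the linearisation $\tau = e_1 \dots e_m$, following the inductive definition of $e_i'$. Only NIC reads ($\nlR$, $\nrR$) and NIC writes ($\nlW$, $\nrW$) can change value, since all other events satisfy $e_i' = e_i$. So the case split is over these four event types.

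\emph{Read case.} Let $e_i$ be an $\nlR$ or $\nrR$ on location $x$ (or $\bar y$).
\begin{itemize}
\item Let $w$ be the $\rf$-source of $e_i$ in $T$.
\item Since $\tau$ extends $\ob \supseteq \rf \cup \mo \cup \rb$, $w$ is exactly the last write to $x$ before $e_i$ in $\tau$.
\item If $w = e$, then $e \to[\rf] e_i$, so $e \to[\ib] e_i$.
\item Otherwise the last write before $e_i$ in $\tau$ with $e$ removed is $e_k = w$. The new value is $v_w(e_k')$.
\item If $e_k' = e_k$, the value is unchanged, contradicting the hypothesis.
\item Hence $e_k$ changed, and $k < i$. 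By induction $e \to[\ib] e_k$, and since $e_k \to[\rf] e_i$ we get $e \to[\ib] e_i$.
\end{itemize}

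\emph{Write case.} Let $e_i$ be an $\nrW$ or $\nlW$.
\begin{itemize}
\item Its value is copied from the $\po$-immediate predecessor $e_k$, the matching NIC read, with $k < i$.
\item The original event had $v_w(e_i) = v_r(e_k)$, by construction of $\pathset$.
\item A change in $e_i$ therefore forces a change in $e_k$. By induction $e \to[\ib] e_k$.
\item It remains to show $e_k \to[\ib] e_i$, i.e.\ that the $\po$-edge from $\nlR$ to $\nrW$, or from $\nrR$ to $\nlW$, on the same queue pair lies in $\ippo$. This holds because the corresponding entries in \autoref{fig:ippo-oppo} are \sqp, and matching events share a queue pair.
\end{itemize}

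\emph{Main obstacle.} The delicate step is the read case. There we must make sure that "last write before $e_i$ in $\tau$" really coincides with the $\rf$-source in $T$, for every location, including remote locations and writes produced by CAS. I would justify this from $\tau$ being an $\ob$-linearisation: $\rf \subseteq \ob$ puts the source before the read, and $\rb \subseteq \ob$ puts any $\mo$-later write after the read. Together with $\mo$ being total per location and also contained in $\ob$, this excludes any intervening write.

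A second subtlety is that the definition of $e_i'$ in the read case refers to $e_k'$, an event possibly itself modified. This is exactly why induction on $\tau$-position is the right framing: it lets the changed-value property propagate along $\rf$ chains, and along matching read/write pairs, back to $e$.
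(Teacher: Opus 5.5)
Your proof is correct and follows essentially the same route as the paper. Both argue by induction on the position in $\tau$. For NIC reads, the $\rf$-source is either $e$ itself or an earlier write whose value also changed, and $\rf \subseteq \ib$ closes the step. For NIC writes, the $\po$-preceding matching read must also have changed, and the $\ippo$ edge from it gives $e \to[\ib] \bar e$. You also spell out two steps the paper leaves implicit: that the $\rf$-source is the last write before the read in $\tau$ (because $\rf, \mo, \rb \subseteq \ob$), and that the edge between matching NIC events lies in $\ippo$.
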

\begin{proof}
    We show that the claim holds for the $k$-th event of $\tau$ by induction on $k$.
    If $k=1$, $\bar e$ cannot be a NIC read or write, since both require an initialisation event beforehand.
    In the case $k > 1$ we assume that the claim holds for all events in $\tau$ up to the $(k-1)$-th event.
    Otherwise, if $\bar e$ is a NIC read event, then there must be a write event $w$ such that $w \to[\rf] \bar e$.
    In the case $w = e$ we are already done, since $\rf \subset \ib$.
    In the general case, let $w'$ be the corresponding event of $w$ in $E'$.
    Since $w' \to[\rf'] \bar e'$ and $v_w(w) = v_r(\bar e) \neq v_r(\bar e') = v_w(w')$, we can apply the induction hypothesis to $w$ and obtain a path $e \to[\ib] w \to[\rf \subset \ib] \bar e$.
    Lastly, if $\bar e$ is a NIC write event, there must be a matching read event $r$ that immediately precedes $\bar e$ in $\po$.
    We apply the same logic as in the previous case to obtain $e \to[\ib] r \to[\ippo \subset \ib] \bar e$.
    Note that $r \to[\po] \bar e$ is in $\ippo$ since $r$ and $\bar e$ are the two matching events of the same remote write or remote read instruction.
\end{proof}

\begin{lemma}
\label{lem:T_prime_trace}
    $T' = \tuple{ E', \po', \pf', \rf', \mo', \nfo' }$ is a trace of \rdmaprog.
\end{lemma}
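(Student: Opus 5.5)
We need to check the three well-formedness conditions singled out after the definition of $T(\tau)$: the event set, the polls-from relation $\pf'$, and the reads-from relation $\rf'$. Everything else ($\po'$, $\mo'$, $\nfo'$) is automatically of the required shape once $T' = T(\tau')$ is defined from a sequence.

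\textbf{Event set.} First I check that for each thread $t$ the events of $E'$ on $t$ form a prefix of some $s'_t \in \pathset(\process_t)$.

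Since $e$ is $\po$-maximal, removing it from $E^t$ leaves a prefix $s_t[1..n-1]$ of the original path $s_t$. The value adjustments only touch NIC events, and they do so in matched pairs. A $\nlR(x,v,\bar n)\cdot\nrW(\bar y,v)$ pair coming from a remote write $\bar y := x$ receives a new read value $v'$, and its $\nrW$ receives the same $v'$ by construction. Likewise a $\nrR\cdot\nlW$ pair coming from a remote read receives a common new value.

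By the definition of $\pathset$ in Appendix~\ref{sec:def_seq}, both $\pathset(\bar y := x)$ and $\pathset(x := \bar y)$ allow an arbitrary $v\in\valset$. Hence replacing $v$ by $v'$ in such a pair yields another element of $\pathset(\process_t)$ with the same control flow. CPU events (in particular $\assume$ and $\CAS$ reads) are left unchanged, so no branching choice is invalidated.

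One subtlety remains: if $e$ is the first half of a NIC pair, its partner is cut off. This is harmless, because we only need a prefix. The pair's second half would be $\po$-after $e$, contradicting $\po$-maximality, so this case in fact cannot arise.

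\textbf{Polls-from.} The condition on $\pf'$ is that on every queue pair, every poll has a $\po$-earlier non-polled NIC write. This depends only on $\po$ and the event types, both of which are preserved on $E'$.

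The only risk is that $e$ is a NIC write that some poll polled from. But such a poll would be $\po$-after $e$, again contradicting $\po$-maximality. So $\pf' = \pf|_{E'}$ is well-formed.

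\textbf{Reads-from.} In $T(\tau')$ each read reads from the last $\tau'$-earlier write on its location. A $\tau'$-earlier write always exists, since initialisation events precede everything in $\tau$ (they have only outgoing $\ob$-edges). The remaining obligation is that the values match, i.e. $v_r(r') = v_w(w')$ whenever $w' \to[\rf'] r'$.

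\begin{itemize}
\item For NIC reads ($\nlR$, $\nrR$), this holds by definition, since $v_r(r')$ was set to the value of the last preceding write in $\tau'$.
\item For CPU reads ($\lR$, $\CAS$), whose values are unchanged, this is the main obstacle, and I argue by contradiction. Suppose the last write $w'$ before $r'$ in $\tau'$ has $v_w(w') \neq v_r(r)$. There are two possibilities.
  \begin{itemize}
  \item Either $w$ was not the $\rf$-source of $r$ in $T$. Then $e$ was that source, so $e \to[\rf] r$ with $r$ a CPU event. This contradicts maximality, since $\rf \subseteq \ib$.
  \item Or $w$ was the $\rf$-source of $r$ but $w$'s value changed. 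Then \autoref{claim:changed_value_is_ib} gives $e \to[\ib] w \to[\rf] r$, again an $\ib$-path from $e$ to a CPU event, contradicting maximality.
  \end{itemize}
  Otherwise, removing $e$ from $\tau$ does not change which write immediately precedes $r$ on that location, apart from the case $w = e$ already excluded. Hence $w \to[\rf] r$ in $T$ and the values agree.
\end{itemize}

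I expect this CPU-read case to be the delicate step. Care is needed to show that the last write before $r$ in $\tau'$ is exactly the $T$-source of $r$. This uses that $\tau$ linearises $\ob \supseteq \rf \cup \mo$, so in $\tau$ the $\rf$-source of $r$ is the last write to that location before $r$.
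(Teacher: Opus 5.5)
Your proof is correct and follows the paper's argument essentially step for step. $\po$-maximality of $e$ handles the event set and $\pf'$, matched NIC read/write pairs receiving a common new value keep each thread's events a prefix of a path in $\pathset(\process_t)$, and for CPU reads you rule out $e$ as the $\rf$-source by maximality and otherwise invoke \autoref{claim:changed_value_is_ib}. There are two small imprecisions in side remarks:
\begin{itemize}
\item That the $\rf$-source of $r$ is the last same-location write before $r$ in $\tau$ needs $\rb \subseteq \ob$, not just $\rf \cup \mo$: an intervening $\mo$-later write $w_1$ would give $r \to[\rb] w_1$.
\item A $\tau'$-earlier write exists because the initialisation write of $\loc(r)$ is $\mo$/$\rf$-before $r$, not because initialisation events precede everything in $\tau$.
\end{itemize}
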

\begin{proof}
    We need to show that the event set $E'$, the polls-from relation $\pf'$ and the reads-from relation $\rf'$ are well-formed.

    The event set $E'$ was obtained from $E$ by first removing $e$ and then adjusting the values of depending NIC reads and writes.
    Let $t \in \Tid$ be such that $e \in E^t$.
    Due to maximality of $e$ we have $\iota \of e = |E^t|$ and thus $E^t \setminus \set e = \set{\tuple{\iota, t, s_t[\iota]} \mid 1 \leq \iota \leq |E^t| - 1}$ is well-formed.
    For the second step, consider a pair of NIC reads and writes $r, w \in E^t, t\in \Tid$, corresponding to a remote write or read instruction.
    Let $r', w'$ be the corresponding events in $E'$.
    Note that $v_r(r') = v_w(w')$ by construction.
    By the definition of $\pathset$, replacing $r$ and $w$ in $s_t$ with $r'$ and $w'$ yields another sequence $s'_t \in \pathset(\process_t)$.
    Thus, changing the values of pairs of NIC reads and writes preserves the well-formedness of the event set.

    For \pf', we only need to show that $E'$ has at least as many NIC writes as polls on each queue pair.
    Since $E' = E \setminus \set{e}$, this can only be violated if $e$ is a polled NIC write.
    However, this contradicts the maximality of $e$, since any poll event that polls from $e$ would be \po-later than $e$.

    For \rf', consider $w' \to[\rf'] r'$.
    We need to show $v_w(w') = v_r(r')$, the rest is straightforward.
    If $r'$ is a NIC (either local or remote) read, then its value was defined to match $w'$.
    Otherwise, $r'$ is either a successful CAS event or a local read, both of which are CPU instructions.
    Let $r$ be the corresponding event of $r'$ in $E$ and let $w$ be such that $w \to[\rf] r$.
    We can rule out $w = e$, since this would imply $e \to[\rf \subset \ib] r$ which contradicts maximality of $e$.
    It follows that $w \neq e$ is the corresponding event of $w'$ in $E$.
    If both $w'$ and $r'$ have the same value as $w$ and $r$ respectively, then we are done.
    Otherwise, by \autoref{claim:changed_value_is_ib}, we have either $e \to[\ib] w \to[\rf \subset \ib] r$ or $e \to[\ib] r$, which both contradict maximality of $e$.
\end{proof}

\begin{theorem}
\label{thm:T_prime_consistent}
    $T'$ is an RDMA-consistent trace of \rdmaprog.
\end{theorem}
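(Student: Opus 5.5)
By \autoref{lem:T_prime_trace}, $T'$ is already a trace of \rdmaprog. The earlier theorem shows that \ib-cyclicity implies \ob-cyclicity, so it suffices to prove that $\ob'$ is acyclic. The plan is to show that every generating edge of $\ob'$ is a forward edge with respect to $\tau'$, the sequence obtained from $\tau$ by deleting $e$. This sequence is a strict total order on $E'$, so acyclicity follows. The alternative is to show $\ob' \subseteq \ob|_{E'}$ and inherit acyclicity from $\ob$. I would use this alternative as a cross-check for the edge types where $\tau$-ordering is not immediate.

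We go through the generators of $\ob'$ one by one.

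\textbf{Relations defined from $\tau'$.} The relations $\rf'$, $\mo'$ and $\nfo'$ are defined directly from the order $\tau'$ in the construction of $T(\tau')$, so they are forward by definition. For $\rb' = \rf'^{-1};\mo'$, take $w \to[\rf'] r$ and $w \to[\mo'] w_2$. Then $w_2$ is a write on the same location placed after $w$. No write on that location lies between $w$ and $r$, so $r <_{\tau'} w_2$, and the edge is forward.

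\textbf{Relations restricted from $T$.} The relations $\po'$, $\oppo'$ and $\pf'$ are restrictions of the corresponding relations of $T$ to $E'$. The only point to check for $\pf'$ is that removing $e$ does not shift the poll matching. This holds because $e$ is \po-maximal, so no poll of its thread follows it. Hence $\oppo' \subseteq \oppo \subseteq \ob$ and $([\nlW];\pf') \subseteq \ob$. Since $\tau$ extends $\ob$, these edges are forward in $\tau$, and therefore in $\tau'$.

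\textbf{The term $[\Inst];\ib'$.} This is the main obstacle, because $\ib' = (\ippo' \cup \rf' \cup \pf' \cup \nfo')^+$ contains $\rf'$ and $\nfo'$, which are now induced by $\tau'$ rather than inherited from $T$. A path in $\ib'$ may therefore use edges that are not in $\ib$. The approach is to show that every generating edge of $\ib'$ is forward in $\tau'$, so that all of $\ib'$ is forward. For $\rf'$ and $\nfo'$ this holds by construction. For $\pf'$ it holds because $\pf' \subseteq \pf \subseteq \ib$. For $\ippo'$ it holds because $\ippo' \subseteq \ippo \subseteq \ib$.

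The remaining difficulty is that $\tau$ is only a linearisation of $\ob$, not of $\ib$. Edges of $\ippo \setminus \oppo$, namely NIC write $\to$ \nF, and edges of $[\nrW];\pf$ need not be forward in $\tau$. The fix is to argue only about $\ib'$-paths that start at an instantaneous event. Take such a path $a \to[\ib'] b$ with $a \in \Inst$. Induct along the path and maintain the invariant $a \to[\ob] c$ in $T$ (or $c = a$) for each intermediate event $c$. Then consider the next edge $c \to d$ by case:
\begin{itemize}
\item If the edge comes from $\rf'$ or $\nfo'$, it is forward in $\tau'$. If it coincides with the corresponding $T$-edge, the invariant is preserved. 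If it is a new edge created by removing $e$, such as $w \to[\mo] e \to[\rf] r$ turning into $w \to[\rf'] r$, then $r$ is still after $a$ in $\tau$, because $c$ is after $a$ and $d$ is after $c$.
\item If the edge is an \ib-only edge of $T$, then prepending $a$ gives $a \to[\ib] d$ in $T$, hence $a \to[\ob] d$ via $[\Inst];\ib$. If $a$ itself is \ob-reached only through a $\tau'$-order, I would switch to the invariant ``$a <_{\tau'} c$'', and handle the \ib-only steps using the fact that their sources lie \ob-after some instantaneous predecessor.
\end{itemize}
Concluding $a <_{\tau'} b$ for every path gives that all $\ob'$-generators are forward in $\tau'$. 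Therefore $\ob'$ is acyclic and $T'$ is RDMA-consistent.
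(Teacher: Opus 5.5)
Your handling of $\rf'$, $\mo'$, $\nfo'$, $\rb'$, $\oppo'$ and $[\nlW];\pf'$ is correct. Showing forwardness in $\tau'$ directly for the first four is even slightly simpler than the paper's route via $\ob' \subseteq \ob$.

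The gap is in the only hard case, $[\Inst];\ib'$. Your invariant ``$a \to[\ob] c$ in $T$'' cannot absorb an \ib-only step $c \to d$, i.e.\ an edge in $(\ippo\setminus\oppo)\cup([\nrW];\pf)$. From $a \to[\ob] c$ and $c \to[\ib] d$ nothing follows, because \ob followed by such an edge need not lie in \ob. Your phrase ``prepending $a$ gives $a \to[\ib] d$'' silently assumes $a \to[\ib] c$, which is not what you maintain. The positional fallback $a <_{\tau'} c$ is weaker still, since exactly these edges may point backwards in $\tau$. Your remark that ``their sources lie \ob-after some instantaneous predecessor'' fails for the same reason. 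In the new-$\rf'$ case you also discard information by recording only positions, although $w \to[\mo] e \to[\rf] r$ is itself an \ob-path of $T$.

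The missing idea, which is what the paper uses, is to split the $\ib'$-path at the new $\rf'$-edges:
\begin{enumerate}
\item Every edge of the path is an edge of $\ippo$, $\pf$, $\nfo$ or $\rf$ in $T$, except the edges $w \to[\rf'] r$ arising from $w \to[\mo] e \to[\rf] r$.
\item The target $r$ of such a new edge is a read, hence instantaneous.
\item Take the last new edge on the path. The remainder of the path is an \ib-path of $T$ starting at an instantaneous event, so $r = b$ or $r \to[\ob] b$ via $[\Inst];\ib$.
\item The prefix gives $a = w$ or $a \to[\ob] w$: directly via $[\Inst];\ib$ if it contains no new edge, and by induction on the number of new edges otherwise.
\item The middle piece $w \to[\mo] e \to[\rf] r$ lies in \ob.
\end{enumerate}
Hence $a \to[\ob] b$ in $T$, so $a <_\tau b$ and therefore $a <_{\tau'} b$.

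Phrased as an invariant along the path, you need: there is an instantaneous $c_0$ with ($c_0 = a$ or $a \to[\ob] c_0$) and ($c_0 = c$ or $c_0 \to[\ib] c$ in $T$). Reset $c_0 := r$ at each new $\rf'$-edge. With this strengthening your argument goes through; without it, the main case is not proved.
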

\begin{proof}
    \autoref{lem:T_prime_trace} shows that $T'$ is a trace of \rdmaprog.
    For consistency, we need to show that $\ob'$ is irreflexive.
    In particular, we show that $\ob' \subseteq \ob$.
    Consider $a' \to[\rel'] b'$ and let $a$ and $b$ be the corresponding events in $E$.
    For each $\rel \in \set{\oppo, \rf, ([\nlW]; \pf), \nfo, \rb, \mo, ([\Inst]; \ib)}$ we show that $a \to[\ob] b$:
    \begin{itemize}
        \item
        Since $T = T(\tau)$ and $T' = T(\tau')$, it follows by construction that $\rel' = \rel|_{E'} \subset \rel$ for $\rel \in \set{\oppo, \nfo, \mo}$.
        The same holds for $([\nlW]; \pf)$, since due to the maximality of $e$, it cannot be the case that $e$ is a NIC write that is polled from.

        \item
        If $\rel = \rf$, then either $a \to[\rf] b$, or $a \to[\mo] e \to[\rf] b$ and $e$ is a write event that immediately follows $a$ in $\mo$.
        In both cases it holds that $a \to[\ob] b$.

        \item
        If $\rel = \rb$, then $a' \to[\rb'] b'$ is due to a write event $w'$ where $w' \to[\rf'] a'$ and $w' \to[\mo'] b'$.
        Let $w$ be the write event such that $w \to[\rf] a$ (either $w$ is the corresponding event of $w'$ or $w = e$).
        Since $a < b$ with respect to both $\tau$ and $\tau'$, it holds in particular that $w <_\tau b$.
        Thus, $w \to[\mo] b$ and therefore $a \to[\rb] b$.

        \item
        In the case of $\rel = ([\Inst]; \ib)$, we first observe that similar to above, $\rel' = \rel|_{E'} \subset \rel$ for $\rel \in \set{\ippo, \pf, \nfo}$.
        Let $a' = e_0', \dots, e_n' = b'$ be an $\ib'$-path from $a'$ to $b'$, that is, $e_i' \to[\rel_i'] e_{i+1}'$ and $\rel_i \in \set{ \ippo, \rf, \pf, \nfo}$ for all $0 \leq i < n$.
        Furthermore, let $a = e_0, \dots, e_n = b$ be the corresponding events in $E$.
        If there are no events $w$ and $r$ such that $w \to[\mo] e \to[\rf] r$ as in the second case above, then $e_i \to[\ib] e_{i+1}$ for all $0 \leq i < n$ and we are done.
        Otherwise, choose $r$ to be the last such event.
        We then have the following path:
        $$ a \to[\ib] w \to[\mo] e \to[\rf] r \to[\ib] b $$
        Each of these edges is in \ob, because both $a$ and $r$ are instantaneous.
    \end{itemize}
\end{proof}

\begin{lemma}
\label{lem:rel_equals_rel_prime_prime}
    For $\rel \in \set{\po, \ippo, \oppo, \pf, \rf, \mo, \rb}$, it holds that $\rel'' = \rel$.
\end{lemma}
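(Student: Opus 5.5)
The plan is to reduce every relation in the list to one question: is the relative order of certain pairs of events the same in $\tau$ and in $\tau''$? Note first that $T''=T(\tau'')$ is built on exactly the event set $E$ of $T$, with the original values, so $E''=E$.

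\textbf{Step 1: $\po$, $\ippo$, $\oppo$, $\pf$.} By \autoref{def:trace} and \autoref{fig:ippo-oppo}, these relations depend only on the event set: program counters, thread ids, labels and $\sqp$. Hence $\po''=\po$, $\ippo''=\ippo$, $\oppo''=\oppo$ and $\pf''=\pf$ immediately.

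\textbf{Step 2: $T$ is determined by $\tau$ on $\rf$ and $\mo$.} Since $\tau$ linearises $\ob \supseteq \rf\cup\mo\cup\rb$, two facts hold.
\begin{itemize}
\item $\mo$ coincides with the $\tau$-order on same-location writes.
\item The $\rf$-source of each read $r$ is the $\tau$-last write to $\loc(r)$ before $r$. Indeed, any later write $w$ before $r$ would satisfy $w_r \to[\mo] w$, where $w_r$ is the source of $r$, giving $r \to[\rb] w$, which is a backward $\ob$-edge.
\end{itemize}
In other words, $T$ agrees with $T(\tau)$ on $\rf$ and $\mo$. The same argument shows that $\rf'$ and $\mo'$ are read off $\tau'$, which is $\tau$ without $e$.

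\textbf{Step 3: key ordering claim.} It now suffices to show that for any two events $a,b$ on the same location, at least one of which is a write, we have $a<_\tau b$ iff $a<_{\tau''} b$. Given this, $\mo''=\mo$ follows. The last write before each read is also the same in both orders, so $\rf''=\rf$. Finally $\rb''=\rf''^{-1};\mo''=\rb$.

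We prove the claim by cases on where $a$ and $b$ lie.
\begin{itemize}
\item If one of them is $e$, or they lie in different parts among $\tau_1,\tau_2$, their relative position is unchanged. This is because $\tau''=\tau_1\downarrow_\sigma\, e\, \tau_2\downarrow_\sigma$ keeps the blocks in the same order as $\tau=\tau_1\,e\,\tau_2$.
\item If $a,b$ lie in the same part, their $\tau''$-order is their $\sigma$-order, where $\sigma$ linearises $\sc'$. It therefore suffices to show that $a'<_{\tau'} b'$ implies $a' \to[\sc'] b'$.
\begin{itemize}
\item Two writes: the edge is $\mo'$.
\item A write $w'$ before a read $r'$: either $w'$ is the $\rf'$-source of $r'$, or $w' \to[\mo'] w_r' \to[\rf'] r'$, where $w_r'$ is the source of $r'$ (the $\tau'$-last write before $r'$).
\item A read $r'$ before a write $w'$: the source $w_r'$ of $r'$ precedes $w'$, so $w_r' \to[\mo'] w'$ and hence $r' \to[\rb'] w'$.
\end{itemize}
CAS events, being both reads and writes, are covered by combining these cases.
\end{itemize}

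\textbf{Expected obstacle.} The main care point is Step 2 together with the read/write cases of Step 3. There one must use that $\rf'$ may differ from $\rf$ when $e$ was a source. The argument relies only on positional facts in $\tau'$, which is $\tau$ without $e$, and never on the identity of sources in $T$, so this difference does no harm.
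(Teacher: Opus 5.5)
Your proof is correct and follows essentially the same route as the paper's. Both arguments first note that $\po$, $\ippo$, $\oppo$ and $\pf$ are static. Both then get $\rf$ and $\mo$ from two facts: the block order of $\tau_1$, $e$, $\tau_2$ is preserved, and within each block $\sigma$ agrees with $\tau'$ on conflicting events because $\sc'$ orders every such pair in $\tau'$-order. Finally, $\rb$ follows from $\rf$ and $\mo$. Your version packages this as one order-preservation claim for conflicting pairs, where the paper argues edge by edge. It also proves the $\sc'$ fact explicitly, which the paper only invokes as $T(\tau').\sc = T(\sigma).\sc$. This makes it slightly cleaner, and it covers the case where the $\rf$-source is $e$ itself without the special handling the paper glosses over.
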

\begin{proof}
    The claim holds trivially for \po, \ippo, \oppo and \pf, since they depend statically on the event set $E'' = E$.
    For $\rel \in \set{\rf, \mo}$, consider $a \to[\rel] b$.
    It holds that $a <_\tau b$, since \rf and \mo are subsets of \ob and $\tau$ is a linearisation of \ob.
    If $a \in \tau_1\downarrow_\sigma e$ and $b \in e \tau_2\downarrow_\sigma$, then $a <_{\tau''} b$ follows trivially.
    Otherwise $a,b \in \tau_j\downarrow_\sigma$ for some $j \in \set{1,2}$.
    In that case, we observe that $a <_{\tau'} b$ by construction of $\tau'$.
    Since $a$ and $b$ must be related by $\rel'$ (note that for $\rel = \rf$ there cannot be a write event on the same variable between $a$ and $b$), it even holds that $a \to[\rel'] b$.
    Now, $T(\tau').\sc = T(\sigma).\sc$ implies $a <_\sigma b$, which furthermore implies $a <_{\tau''} b$.

    If $\rel = \mo$, we can already conclude that $a \to[\mo''] b$, since $a$ and $b$ must be $\mo''$-related and $a <_{\tau''} b$.
    For $\rel = \rf$, we need to additionally show that there is no other write event on the same variable (\emph{conflicting event}) between $a$ and $b$ in $\tau''$.
    Let $w$ be a write on the same variable occuring after $b$ with respect to $\tau$, that is, $a \to[\mo] w$ and $b <_\tau w$.
    Since we already know $a <_{\tau'} b$ and $\mo'' = \mo$, all we need to show is that $b <_{\tau''} w$.
    If $b = e$ or $w = e$, this is clearly the case by the construction of $\tau''$.
    Otherwise, we have $b <_{\tau'} w$ and therefore $b \to[\rb'] w$.
    Since $T(\tau').\sc = T(\sigma).\sc$, it follows that $b \to[T(\sigma).\rb] w$ and particularly $b <_{\sigma} w$.
    This implies $b <_{\tau''} w$, since $w \in \tau_1$, $b \in \tau_2$ is not possible due to $b <_\tau w$.

    Lastly, for $\rel = \rb$ the statement follows since \rb is defined in terms of \rf and \mo.
\end{proof}

\begin{theorem}
\label{thm:T_prime_prime_consistent}
    $T'' = \tuple{ E'', \po'', \pf'', \rf'', \mo'', \nfo'' }$ is an RDMA-consistent trace of \rdmaprog.
\end{theorem}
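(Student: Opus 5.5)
We have to show two things: that $T''$ is a trace of \rdmaprog (well-formedness), and that $\ob''$ is acyclic. For acyclicity we show that every $\ob''$-edge is a forward edge with respect to $\tau'' = \tau_1\downarrow_\sigma~e~\tau_2\downarrow_\sigma$.

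\textbf{Well-formedness.} Since $E'' = E$, the event set is well-formed. By \autoref{lem:rel_equals_rel_prime_prime}, $\pf'' = \pf$ and $\rf'' = \rf$, so both inherit well-formedness from $T$. In particular, every $\rf''$-pair has matching values because it is an $\rf$-pair of $T$.

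\textbf{Generating edges that are forward by construction.} We check each generating relation of $\ob''$ separately.
\begin{itemize}
\item $\rf''$, $\mo''$ and $\nfo''$ are forward edges by the definition of $T(\tau'')$.
\item $\rb''$ is forward because $\rb'' = \rb$ and an $\rb$-edge $r \to w$ comes with $w_0 \to[\rf''] r$ and $w_0 \to[\mo''] w$. Since $\mo''$ is total on writes and $\rf''$ reads from the latest preceding write in $\tau''$, this forces $r <_{\tau''} w$.
\item $\oppo'' = \oppo$ and $([\nlW];\pf'') = ([\nlW];\pf)$ also agree with $T$. For such an edge $a \to b$ we have $a <_\tau b$.
\end{itemize}
If $a$ and $b$ lie on different sides of $e$ in $\tau$, the edge is forward in $\tau''$ since $e$ keeps its position as the separator. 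If both lie in the same $\tau_j$, then $a \to[\rel'] b$ holds in $T'$, because these relations are restrictions of $\rel$ to $E'$. Since $\oppo \subseteq \po \subseteq \sc$ and $\pf \subseteq \po$, we get $a <_\sigma b$, hence $a <_{\tau''} b$. The cases where $a = e$ or $b = e$ are handled by the same positional argument.

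\textbf{Main obstacle: the $([\Inst];\ib'')$ edges.} Here $\ib''$ involves $\nfo''$, which genuinely differs from $\nfo$, as the example in \autoref{sec:example} shows. My plan is to prove the stronger statement that every $\ib''$-path starting at an instantaneous event runs forward in $\tau''$. I would argue by contradiction: take a backward $\ib''$-path $a \to \cdots \to b$ with $a \in \Inst$ and $b <_{\tau''} a$, and consider the three positional cases for $a$ and $b$.
\begin{itemize}
\item \emph{$a$ in $\tau_2$, $b$ in $\tau_1$ (or one of them is $e$).} Every $\ib''$-step other than $\nfo''$ is also an $\ib$-step, and $\nfo''$-steps cannot cross from $\tau_2$ back to $\tau_1$. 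So a backward crossing would give an $\ib$-path in $T$ from a $\tau_2$-event back to a $\tau_1$-event or to $e$. Together with $\ib \cup ([\Inst];\ib) \subseteq$ constraints respected by $\tau$, this contradicts $\tau$ linearising $\ob$.
\item \emph{Both endpoints in the same $\tau_j$.} Each step of the path is a $\ippo$, $\rf$, $\pf$ or $\nfo$ step among events whose relative order in $\tau''$ is the $\sigma$-order. For $\ippo$, $\pf$ and $\rf$ the steps are contained in $\sc'$, so they are $\sigma$-forward; the $\rf$ case uses \autoref{lem:rel_equals_rel_prime_prime}. The $\nfo''$-steps are $\tau''$-forward by definition.
\item \emph{Where maximality enters.} If the path passes through $e$, we need that no $\ib$-edge leaves $e$ toward a CPU event. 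This is exactly where \autoref{def:maximal_event} is used. It rules out the one configuration in which an instantaneous event in $\tau_1$ would have to wait for something placed after $e$.
\end{itemize}
I expect the delicate part to be NIC fences and polls: an $\ippo$-edge of type $\nrW \to \nF$ can lead from a non-instantaneous event into an instantaneous one. For these I would reuse the trick from the proof that \ib-cycles imply \ob-cycles, namely pushing the problematic edge into $\sc'$ via $\po$.

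Once all generating edges are forward, $\ob''$ is contained in the strict total order $<_{\tau''}$ and is therefore acyclic. So $T''$ is RDMA-consistent.
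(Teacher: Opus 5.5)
Your handling of well-formedness and of the $\rf''$, $\mo''$, $\nfo''$, $\rb''$, $\oppo''$ and $([\nlW];\pf'')$ edges is fine and matches the paper. The $([\Inst];\ib'')$ case, however, has a genuine gap, and it is the real content of the theorem.

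In your first bullet you observe that a backward crossing step $e_k \to e_{k+1}$ (from the $e\,\tau_2$ side to the $\tau_1\,e$ side) is an $\ib$-step of $T$. You then claim a contradiction because ``$\ib \cup ([\Inst];\ib)$'' is respected by $\tau$. This is false: $\tau$ only linearises $\ob$, and $\ib \not\subseteq \ob$; only $([\Inst];\ib) \subseteq \ob$. Since $\oppo''$, $\rf''$, $([\nlW];\pf'')$ and $\nfo''$ are already forward, the crossing step must be of the form $[\nrW];\ippo;[\nF]$, $[\nlW];\ippo;[\nF]$ or $[\nrW];\pf$. These are exactly the $\ib$-edges that are not in $\ob$ and that can legitimately be backward in $\tau$: a NIC write may be observed after a later fence or poll. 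So a single backward $\ib$-step gives no contradiction. You need an \emph{instantaneous} event that $\ib$-reaches $e_{k+1}$ \emph{in $T$} and lies after it in $\tau$. The start $a$ of the path does not serve, because the prefix $a \to \cdots \to e_k$ may use $\nfo''$-steps that are reversed relative to $\nfo$, and so are not $\ib$-steps of $T$. Your remarks about maximality of $e$ and about pushing the edge into $\sc'$ via $\po$ do not close this. The crossing step runs between the two parts, where $\sigma$ says nothing, and maximality is not what is needed here.

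The missing idea, as in the paper, is to locate where the $\ib$-chain of $T$ breaks. Take the first crossing step on a path starting at an instantaneous $a$ on the $e\,\tau_2$ side. Choose $\ell \le k$ with $e_\ell \to[\ib] e_k$ but $(e_{\ell-1}, e_k) \notin \ib$. Such an $\ell$ exists, since otherwise $a \to[\ib] e_k \to[\ib] e_{k+1}$ would be a backward $([\Inst];\ib)$-edge in $\tau$.

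By the same argument, none of $e_\ell, \dots, e_k$ is instantaneous, so they are NIC writes on the queue pair of $e_{k+1}$. Consequently the step $e_{\ell-1} \to e_\ell$ must be an $\nfo''$-edge whose direction is reversed in $T$. The local variant is impossible, since an $\nlR$-event $\po$-before an $\nlW$-event on the same queue pair is already related by $\ippo$. Hence $e_{\ell-1} \in \nrR$ and $e_\ell \in \nrW$.

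Now $e_{\ell-1}$ is instantaneous, and its matching NIC local write $w$ yields an $\ib$-path of $T$:
\begin{itemize}
\item in the fence case, $e_{\ell-1} \to[\ippo] w \to[\ippo] e_{k+1}$;
\item in the poll case, $e_{\ell-1} \to[\ippo] w \to[\pf] p \to[\ippo] e_{k+1}$ for a suitable poll $p$.
\end{itemize}
Either path is a backward $([\Inst];\ib)$-edge in $\tau$, which is the desired contradiction. Without this argument, or an equivalent one, acyclicity of $\ob''$ is not established.
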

\begin{proof}
    The event set $E''$ is well-formed since it is the same set as the event set $E$ of the trace $T$.
    $\rf''$ is well-formed since it coincides with $\rel$ by \autoref{lem:rel_equals_rel_prime_prime}.
    Thus, $T''$ is a trace of \rdmaprog.

    For consistency, we will show that $\ob''$ has only forward edges with respect to $\tau''$ and is therefore acyclic.
    The relations \rf, \nfo, \mo and \rb have only forward edges by definition of $T(\tau'')$.
    For $\rel \in \set{\oppo, ([\nlW]; \pf), ([\Inst]; \ib)}$, consider $a \to[\rel''] b$.
    We distinguish cases based on the position of $a$ and $b$ within $\tau'' = \tau_1\downarrow_\sigma e \tau_2\downarrow_\sigma$.
    First, if $a \in \tau_1\downarrow_\sigma e$ and $b \in e \tau_2\downarrow_\sigma$, then $a \to b$ is trivially a forward edge.
    Second, if $a,b \in \tau_j\downarrow_\sigma$, $j \in \set{1,2}$, then it is a forward edge since the events within each $\tau_j$ follow \sc order, which includes \po and \rf and therefore also \oppo, \ippo and \pf.
    The last case is $a \in e \tau_2\downarrow_\sigma$ and $b \in \tau_1\downarrow_\sigma e$, which clearly implies a backward edge.
    The remainder of the proof is dedicated to show that this case cannot occur.

    Since $\oppo'' = \oppo$ and $\pf'' = \pf$, if $\rel \in \set{\oppo, ([\nlW]; \pf)}$ then this would imply the backward \ob-edge $a \to[\rel] b$ with respect to $\tau$.
    Thus, we are left with the case $\rel = ([\Inst]; \ib)$.
    Let $a = e_0, \dots, e_n = b$ be an $\ib''$-path from $a$ to $b$, that is, $e_i \to[\rel_i''] e_{i+1}$ and $\rel_i \in \set{ \ippo, \rf, \pf, \nfo}$ for all $0 \leq i < n$.
    We consider the position of $a$ and $b$ within $\tau'' = \tau_1\downarrow_\sigma e \tau_2\downarrow_\sigma$.
    If $a \in \tau_1\downarrow_\sigma e$ and $b \in e \tau_2\downarrow_\sigma$, then $a \to b$ is trivially a forward edge.
    If $a,b \in \tau_j\downarrow_\sigma$, $j \in \set{1,2}$, then each $e_i \to[\rel_i''] e_{i+1}$ is a forward edge:
    \rf and \nfo are always forward edges, and if $\rel_i \in \set{ \ippo, \pf}$, then it is a forward edge since the events within each $\tau_j$ follow \sc order, which respects program order.

    Lastly we have the case $a \in e \tau_2\downarrow_\sigma$ and $b \in \tau_1\downarrow_\sigma e$.
    There must be at least one event $e_k$ such that $e_k \in e \tau_2\downarrow_\sigma$ and $e_{k+1} \in \tau_1\downarrow_\sigma e$, that is, $e_k \to[\rel_k''] e_{k+1}$ is a backward edge.
    Since \oppo-, \rf-, $([\nlW];\pf)$- and \nfo-edges are forward edges, this backward edge must be one of $[\nrW];\ippo'';[\nF]$, $[\nlW];\ippo'';[\nF]$ or $[\nrW];\pf''$.
    In particular, $e_k \to[\ib] e_{k+1}$, since $\ippo'' = \ippo$ and $\pf'' = \pf$.

    Let $\ell \leq k$ be such that $e_\ell \to[\ib] e_k$ but $(e_{\ell-1}, e_k) \not\in\ib$.
    Such an $\ell$ must exist since $a = e_0 \to[\ib] e_k \to[\ib] e_{k+1}$ would be a backward $([\Inst];\ib)$-edge with respect to $\tau$, which cannot exist.
    Furthermore, none of $e_\ell, e_{\ell+1}, \dots, e_{k-1}, e_k$ can be an instantaneous event, since this would again yield a backward $([\Inst];\ib)$-edge.
    In particular, all of them, including $e_\ell$, are write events.
    
    Now, consider $(e_{\ell-1}, e_\ell) \in \ib''\setminus\ib$.
    Since $\rel'' = \rel$ for $\rel \in \set{\ippo, \rf, \pf}$, it must be the case that $e_{\ell-1} \to[\nfo''] e_\ell$ and $e_{\ell} \to[\nfo] e_{\ell-1}$.
    $e_\ell \in \nlW$ is not possible since it implies $e_{\ell-1} \in \nlR$, but $[\nlR];\sqp;[\nlW] \subset \ib$, which contradicts $e_{\ell} \to[\nfo] e_{\ell-1}$.
    Thus, $e_\ell \in \nrW$ and $e_{\ell-1} \in \nrR$.
    We conclude that the writes $e_\ell, \dots, e_k$ cannot be CPU writes, since there are no \ib-edges from NIC writes to CPU writes.
    In particular, all of these writes and also the NIC fence / poll $e_{k+1}$ must be on the same queue pair.

    Now, consider the matching NIC local write $w$ to the NIC remote read $e_{\ell-1}$.
    We have two cases depending on whether the backward edge $e_k \to e_{k+1}$ is due to a memory fence or a poll.
    In the former case, we obtain $e_{\ell-1} \to[\ippo] w \to[\ippo] e_{k+1}$ which is a backward $([\Inst];\ib)$-edge.
    In the latter case, since $w \to[\po] e_k$ and $e_k \to[\pf] e_{k+1}$, there must be a poll event $p$ such that $w \to[\pf] p$ and $p \to[\po] e_{k+1}$.
    We obtain the backward $([\Inst];\ib)$-edge $e_{\ell-1} \to[\ippo] w \to[\pf] p \to[\ippo] e_{k+1}$.
    This is a contradiction since \ob cannot have any backward edges.
\end{proof}

\begin{corollary}
\label{cor:T_prime_prime_violating}
    The traces $T$ and $T''$ have the same SC relation: $T.\sc = T''.\sc$.
\end{corollary}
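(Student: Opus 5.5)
The plan is to argue directly from the definition $\sc = (\po \cup \rf \cup \rb \cup \mo)^+$ and to invoke \autoref{lem:rel_equals_rel_prime_prime}. That lemma already gives $\rel'' = \rel$ for every $\rel \in \set{\po, \rf, \rb, \mo}$. The event sets also agree exactly: $E'' = E$ by construction of $\tau''$, and no values were altered, since $\tau''$ reinserts the original events of $\tau_1$, $e$ and $\tau_2$ rather than their modified counterparts in $E'$. Hence the four generating relations of $T.\sc$ and $T''.\sc$ coincide as sets of pairs over the same event set. Taking transitive closures of equal unions gives $T.\sc = T''.\sc$.

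The only point needing care is that each relation of $T'' = T(\tau'')$ is computed over the original labels of $E$, not over the adjusted values of $E'$. So I would first make explicit that $\tau''$ is a sequence over $E$. The modified events $\bar e'$ were used only to obtain $\sigma$, and the notation $\tau_i\downarrow_\sigma$ orders the original events of $\tau_i$ by their positions in $\sigma$. With this settled, \autoref{lem:rel_equals_rel_prime_prime} applies verbatim, including its \rf case, which already checks well-formedness with matching values.

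Combining this with \autoref{thm:T_prime_prime_consistent} then finishes \autoref{thm:normal_form}. $T''$ is RDMA-consistent, and $T''.\sc = T.\sc$ is cyclic because $T$ was violating, so $T''$ is violating. I expect no real obstacle here. All the difficulty sits in \autoref{lem:rel_equals_rel_prime_prime}, especially the \rf case with its argument that there is no intervening conflicting write, and that lemma is assumed.
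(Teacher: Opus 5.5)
Your proposal is correct and matches the paper's proof, which simply says the corollary follows immediately from \autoref{lem:rel_equals_rel_prime_prime}. Your remark that $\tau''$ is a sequence over the original events of $E$ (so $E'' = E$ exactly) agrees with the paper's explicit note to that effect.
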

\begin{proof}
    This follows immediately from \autoref{lem:rel_equals_rel_prime_prime}.
\end{proof}
\newpage
\section{Instrumentation Details}

\subsection{Detailed RDMA-Consistency Checks}\label{sec:rdma_consistency_details}

This section provides the complete case analysis for ensuring RDMA-consistency during the instrumentation process.
To ensure that an event can be added to the trace in an RDMA-consistent way, the program must check that each \ob-edge is a forward edge.
Additionally, it must verify that \rf is well-formed, i.e., that the value of a read event matches the value of its write event.
We check each relation of \ob as follows:
\begin{itemize}
    \item \oppo:
    This is the most difficult case since there are many different subcases and it can be a backward edge.
    Since the current event is \po-later than all previously simulated events, it can always be added to $\tau_2$ without creating a backward \oppo-edge.
    However, it can only be added to $\tau_1$ if there is not already an event in $\tau_2$ that would create a backward edge.
    
    During the simulation, for each type of event as in \autoref{fig:ippo-oppo}, we track whether an event of that type has already been added to $\tau_2$.
    This tracking is done per queue pair.
    When the program decides to add an event to $\tau_1$, it checks that there is no \oppo-earlier event in $\tau_2$.
    For example, a NIC remote write $\nrW$ can only be added to $\tau_1$ if there is neither a CPU event, a NIC local read $\nlR$, a NIC remote write $\nrW$, nor a remote fence $\nF$ in $\tau_2$ with respect to the same queue pair.
    \item \rf:
    All \rf-edges are forward edges.
    The program only needs to ensure that the value of the read event matches the current value in the memory.
    \item $([\nlW];\pf)$:
    The program keeps a counter of how many non-polled NIC writes have been executed.
    A poll can execute only if the counter is non-zero, in which case it decrements the counter.
    A poll would become a backward edge if placed in $\tau_1$ while polling from a NIC local write from $\tau_2$.
    We prevent this by tracking the first such event that forces its poll to go into $\tau_2$.
    Note that once one poll is in $\tau_2$, all \po-later events must also be in $\tau_2$, since polls are CPU events.
    \item \nfo, \rb, \mo:
    All edges of these types are always forward edges.
    \item $([\Inst];\ib)$:
    This is another difficult case with many subcases to check.
    Consider an $([\Inst];\ib)$ backward edge, i.e., a path $e_0 \to[\rel_1] e_1 \to[\rel_2] \dots \to[\rel_n] e_n$ such that $\rel_i \in \set{ \ippo, \rf, \pf, \nfo }$ for all $i = 1, \dots, n$ and $e_n <_\tau e_0$.
    Consider a path of shortest length.
    The last edge in it must be a backward edge.
    Since we have already ensured that \oppo, \rf, $([\nlW];\pf)$, and \nfo have no backward edges, it must be either $\nW \to[\ippo] \nF$ or $\nrW \to[\pf] \poll$.
    
    \underline{Case $\nW \to[\ippo] \nF$:}
    Since \nW is not an instantaneous event, we distinguish various cases of $e_{n-2} \to[\rel_{n-1}] e_{n-1}$.
    If $\rel_{n-1} = \ippo$ and $e_{n-2}$ is an instantaneous event, then $e_{n-2} \to[\oppo] e_n$, since \po-edges from any instantaneous event to any type of NIC write on the same queue pair are in \oppo.
    Thus, this case cannot occur since the program already checks \oppo.
    The case where $\rel_{n-1} = \ippo$ and $e_{n-2}$ is a NIC write leads to a shorter \ib-path $e_0 \to[\rel_1] \dots \to[\rel_{n-2}] e_{n-2} \to[\ippo] e_n$, which is a contradiction to the minimality of $e_0, \dots, e_n$.
    Since $e_{n-1}$ is a write, $\rel_{n-1}$ can be neither \rf nor \pf.
    Thus, the last case is that $e_{n-1}$ is a NIC read and $\rel_{n-1} = \nfo$.
    We can eliminate $e_{n-2} \to[\po] e_{n-1}$, since this would imply $e_{n-2} \to[\oppo] e_n$ again.
    Thus, $e_{n-1} \to[\po] e_{n-2}$, but since $e_{n-2} \to[\nfo] e_{n-1}$ is in \ob, this cannot also hold for the former.
    The only pair of earlier write and later read that is not in \ob is a local write followed by a remote read.
    We now analyse the position of $e_{n-2}$ within $\tau$.
    Since $e_{n-1} \to[\po] e_n$ and events within $\tau_1$ and $\tau_2$ respect \po, respectively, it must be the case that $e_{n-1} \in \tau_2$ and $e_n \in \tau_1$.
    Similarly, since $e_{n-2} \to[\nfo] e_{n-1}$ but $e_{n-1} \to[\po] e_{n-2}$, it follows that $e_{n-2}$ cannot also be in $\tau_2$.
    Lastly, if $e_{n-2} <_\tau e_n$, then $e_0 \to[\ib] e_{n-2}$ would be a shorter $([\Inst];\ib)$-path forming a backward edge.
    Thus, $e_n <_\tau e_{n-2}$, which means that $e_{n-2}$ is simulated after $e_n$.
    All the program needs to check is that it does not add a NIC local read to $\tau_1$ while there is already a NIC local write in $\tau_2$ which has a \po-later remote fence on the same queue pair that is in $\tau_1$.
    The latter condition can be easily checked and tracked whenever a fence is added to $\tau_1$, since we already track whether there is a NIC local write in $\tau_2$.

    \underline{Case $\nrW \to[\pf] \poll$:}
    We proceed similarly and distinguish cases based on $e_{n-2} \to[\rel_{n-1}] e_{n-1}$.
    If $e_{n-2}$ is a CPU event and $\rel_{n-1} = \ippo$, then we already have $e_{n-2} \to[\oppo] e_n$.
    
    If the relation is \ippo but $e_{n-2}$ is a NIC local read or a NIC fence, we analyse the position of $e_{n-2}$ within $\tau$.
    Suppose $e_{n-2} <_\tau e_n$; then $e_0 \to[\ib] e_{n-2}$ is a shorter path.
    Thus, $e_n <_\tau e_{n-2}$.
    But since $e_{n-2} \to[\po] e_n$, it follows that $e_{n-2}$ cannot be in the same part of $\tau$ as $e_n$.
    Therefore, $e_{n-2} \in \tau_2$.
    The program prevents this case whenever it adds a NIC remote write to $\tau_2$.
    If this write is preceded by a NIC local read or a NIC fence in $\tau_2$, then it is marked as forcing its poll to be in $\tau_2$, similar to how any NIC local write would.
    
    In the next case, $e_{n-2}$ is another remote write and the relation is still \ippo.
    If $e_{n-2} <_\tau e_n$, we consider $e_0 \to[\ib] e_{n-2}$ instead, which is shorter.
    Otherwise, since $e_{n-2}$ and $e_{n-1}$ are on the same queue pair and $e_{n-1}$ is polled by $e_n$, there must be some poll event $e'$ that is polling from $e_{n-2}$.
    Since $e' <_\tau e_n <_\tau e_{n-2}$, it follows that $e_0 \to[\ib] e_{n-2} \to[\pf] e'$ is a shorter backward path, which contradicts minimality.
    
    A similar case is where $e_{n-2}$ is a NIC remote read and the relation is \nfo.
    Here, we observe that the NIC remote read must have a matching NIC local write $e''$ that was polled by some poll $e'$ before $e_n$.
    We obtain $e' <_\tau e_n <_\tau e_{n-2} <_\tau e''$, which implies a backward \ob-edge from the NIC local write $e''$ to the poll $e'$.
    This observation excludes this case and finishes the case distinction, since the NIC remote write $e_{n-1}$ cannot be the target of any \rf- or \pf-edge.
\end{itemize}

\subsection{Detailed SC-Cycle Detection}\label{sec:sc_cycle_detection_details}

This section provides the complete details of how the instrumented program detects SC-cycles.

To detect SC-cycles, we non-deterministically guess the entry and exit points of the SC-cycle in each thread.
The entry and exit points are the \po-first and \po-last events of the thread on the cycle, respectively.
Clearly, the entry and exit points of the same thread are related by \po.
The exit point of one thread and the entry point of the next thread must be related by \rf, \mo, or \rb.
All three of these relate a pair of read or write events on the same variable, where at least one (but possibly both) of the events is a write event.

The implementation can be simplified through the observation that any pair of such events must be \sc-related, either directly through one of these relations or indirectly through a chain of them.
For example, an earlier write $w$ is related to a later read $r$ by either $w \to[\rf] r$ or a chain $w \to[\mo] w' \to[\mo] \dots \to[\rf] r$.

To ensure that the exit point of one thread is later in $\tau$ than the entry point of the next thread in the cycle, we distinguish two cases:
If the exit point was simulated before the entry point, then it is sufficient that either the exit point is in $\tau_1$ or the entry point is in $\tau_2$.
On the other hand, if they were simulated in the opposite order, then both conditions must hold.
This is checked during the simulation of the second event (the one simulated later), and a successful matching pair of events is signaled to the main thread via a flag variable.
After all threads have finished their simulation, the main thread checks whether all pairs of entry and exit points on the guessed cycle were successful.

\end{document}